\newtheorem{thm}{Theorem}
\newtheorem{lemma}[thm]{Lemma}
\newtheorem{obs}[thm]{Observation}
\newcommand\E{\mathbb{E}}
\newcommand\R{\mathbb{R}}
\newcommand\disc{\mathrm{disc}}
\date{\vspace{-5ex}}
\begin{document}

\title{An Algorithm for Koml\'{o}s Conjecture Matching Banaszczyk's Bound}

\author{Nikhil Bansal
\thanks{Department of Mathematics and Computer Science, Eindhoven University of Technology, Netherlands.  
Email:
\href{mailto:n.bansal@tue.nl}{n.bansal@tue.nl}.
Supported by a NWO Vidi grant 639.022.211 and an ERC consolidator grant 617951.}
\and
Daniel Dadush\thanks{Centrum Wiskunde \& Informatica, Amsterdam.
\texttt{dadush@cwi.nl}. Supported by the NWO Veni grant 639.071.510.}
\and
Shashwat Garg\thanks{Department of Mathematics and Computer Science, Eindhoven University of Technology, Netherlands.  
Email:
\href{mailto:s.garg@tue.nl}{s.garg@tue.nl}.
Supported by the Netherlands' Organisation for Scientific Research (NWO) under project no.~022.005.025.
}
}

\maketitle

\begin{abstract}
We consider the problem of finding a low discrepancy coloring for sparse set systems where each element lies in at most $t$ sets. We give an efficient algorithm that finds a coloring with discrepancy 
$O((t \log n)^{1/2})$, matching the best known non-constructive bound for the problem due to Banaszczyk.
The previous algorithms only achieved an $O(t^{1/2} \log n)$ bound.
The result also extends to the more general Koml\'{o}s setting and gives an algorithmic
$O(\log^{1/2} n)$ bound.
\end{abstract}

\section{Introduction}

Let $(V,\mathcal{S})$ be a finite set system, with $V=\{1,\ldots,n\}$ and $\mathcal{S} = \{S_1,\ldots,S_m\}$ a collection of subsets of $V$. For a two-coloring $\chi: V  \rightarrow \{-1,1\}$, the discrepancy of $\chi$ for a set $S$  is defined as $ \chi(S) =  |\sum_{i\in S} \chi(i) |$ and measures the imbalance from an even-split for $S$.
The discrepancy of the system $(V,\mathcal{S})$ is defined as 
\[ \disc(\mathcal{S}) = \min_{\chi:V \rightarrow \{-1,1\}} \max_{S \in \mathcal{S}} \chi(S). \]
That is, it is the minimum imbalance for all sets in $\mathcal{S}$, over all possible two-colorings $\chi$.

Discrepancy is a widely studied topic and has applications to many areas in mathematics and computer science. For more background we refer the reader to the books \cite{Chazelle,Mat99,Panorama}. In particular, discrepancy is closely related to the problem of rounding  fractional solutions of a linear system of equations to integral ones \cite{LSV,R12}, and is widely studied in approximation algorithms and optimization.

Until recently, most of the results in discrepancy were based on non-algorithmic approaches 
and hence were not directly useful for algorithmic applications. 
However, in the last few years there has been remarkable progress in our understanding of the algorithmic aspects of discrepancy \cite{B10,CNN11,LM12,Ro14,HSS14,ES14,NT15}. In particular, we can now match or even improve upon all known applications of the widely used partial-coloring method \cite{Spencer85,Mat99} in discrepancy.
This has, for example, led to several other new results in approximation algorithms \cite{R13,BCKL14,BN15,NTZ13}.


\paragraph*{Sparse Set Systems} 
Despite the algorithmic progress, one prominent question that has remained open is to match the known non-constructive bounds on discrepancy for low degree or sparse set systems. These systems are parametrized by $t$, that denotes the maximum number of sets that contain any element.
 Beck and Fiala \cite{BF81} proved, using an algorithmic iterated rounding approach, that any such set system has discrepancy at most $2t-1$. They also conjectured that the discrepancy in this case is $O(t^{1/2})$, and settling this has been an elusive open problem.

The best known result in this direction is due to Banaszczyk \cite{B97}, which implies an $O(\sqrt{t \log n})$ discrepancy bound for the problem\footnote{We assume here that $t \geq \log n$, otherwise the $O(t)$ bound is better.}. Unlike most results in discrepancy that are based on the partial-coloring method,
Banaszczyk's proof is based on a very different and elegant convex geometric argument, and it is not at all clear how to make it algorithmic.
Prior to Banaszczyk's result, the best known non-algorithmic bound was $O(t^{1/2} \log n)$ \cite{Srin97}, based on the partial-coloring method. This bound was first made algorithmic in \cite{B10}, and by now there are several different ways known to obtain this result \cite{LM12,N13,Ro14,HSS14,ES14}.
However the question of matching Banaszczyk's bound algorithmically for the problem and its variants has been open despite a lot of attention in recent years \cite{N13,B13,ES14,EL15}. In particular, as we discuss in Section \ref{s:overview} there is a natural algorithmic barrier to improving the $O(t^{1/2} 
\log n)$ bound. 

A substantial generalization of the Beck-Fiala conjecture is the following:

\smallskip

 \noindent
\textit{Koml\'{o}s Conjecture:} Given any collection of vectors $v_1,\ldots,v_n \in \mathbb{R}^m$ such that $\|v_i\|_2 \leq 1$ for each $i\in[n]$\footnote{We use $[n]$ to denote the set $\{1,2,\dots,n\}$.}, there exist signs $x_1,\ldots, x_n \in \{-1,1\}$ such that $\|\sum_{i=1}^n x_i v_i \|_\infty =O(1)$.

\smallskip

This implies the Beck-Fiala conjecture by choosing each $v_i$ as the column corresponding to element $i$ in the incidence matrix of the set system scaled  by $t^{-1/2}$.
Again, the best known non-constructive bound here is $O(\sqrt{\log n})$ due to  
Banaszczyk and  the previous algorithmic techniques can also be adapted to achieve $O(\log n)$ constructively for the Koml\'{o}s setting. 

\subsection{Our Results}
In this paper we give the following algorithmic result 
 for the Beck-Fiala problem, which matches the non-constructive bound due to Banaszczyk.

\begin{thm}
\label{thm:main}
Given a set system $(V,\mathcal{S})$ with $|V|=n$ such that each element $i\in V$ lies in at most $t$ sets in $\mathcal{S}$, there is an efficient randomized algorithm that finds an $O(\sqrt{t\log n})$ discrepancy coloring with high probability.
\end{thm}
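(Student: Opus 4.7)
The plan is to first reduce Theorem~\ref{thm:main} to the Komlós setting and then exhibit a single-shot adaptive random walk that matches Banaszczyk's bound. For the reduction, form $v_i = A e_i / \sqrt{t}$ from the incidence matrix $A$ of $\mathcal{S}$ so that $\|v_i\|_2 \le 1$ for each $i$; an algorithm producing signs $x \in \{-1,1\}^n$ with $\|\sum_i x_i v_i\|_\infty \le C\sqrt{\log n}$ then immediately yields $\disc(\mathcal{S}) \le C\sqrt{t\log n}$ after rescaling by $\sqrt{t}$. Hence I would focus on constructing such an algorithm for the Komlós problem.

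The core algorithm I would propose is a discretized Gaussian walk $x(k+1) = x(k) + \gamma\, U_k\, g_k$, where $g_k\sim\mathcal{N}(0,I_n)$, $\gamma$ is a small step size, and $U_k$ is the orthogonal projector onto a state-dependent ``safe'' subspace $W_k\subseteq \R^n$ defined as the intersection of (i) the coordinate subspace of unfrozen indices $F_k = \{i : |x_i(k)|<1\}$ and (ii) the orthogonal complement of the rows $\{v_j\}_{j\in D_k}$, where $D_k = \{j : |\langle v_j,x(k)\rangle| \ge \alpha\sqrt{\log n}\}$ is the set of currently dangerous constraints. Freezing drives $x$ to a vertex of $\{-1,1\}^n$, while orthogonality to dangerous rows ensures that once $\langle v_j,x(k)\rangle$ reaches the target threshold it stops moving. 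The two things to analyze are (a) termination in $O(1)$ continuous time, so that the quadratic variation of each martingale $\langle v_j,x(\cdot)\rangle$ is at most $\|v_j\|_2^2 \cdot O(1) \le O(1)$, and (b) a union-bounded subgaussian tail over $j\in[m]$ yielding $\|V^\top x(T)\|_\infty = O(\sqrt{\log n})$ with probability $1-o(1)$. For (a) I would track the potential $\Phi_k = \sum_i (1 - x_i(k)^2)$ and show its drift is $-\Omega(\dim W_k)\gamma^2$ per step, which requires $\dim W_k = \Omega(|F_k|)$ throughout the walk.

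The main obstacle is ensuring that $\dim W_k \ge |F_k| - o(|F_k|)$ at all times, i.e., that at most a tiny fraction of constraints can be dangerous simultaneously. Matching Banaszczyk's $\sqrt{\log n}$ factor---rather than the easier $O(\log n)$ that standard partial-coloring or Lovett--Meka style walks achieve---requires a careful choice of threshold $\alpha$ coupled with an exponential potential such as $\Psi_k = \sum_j \cosh(\lambda\langle v_j,x(k)\rangle)$ that both controls $|D_k|$ and certifies subgaussian concentration of each individual $\langle v_j,x(k)\rangle$. Managing this self-referential coupling between the capping mechanism for $D_k$ and the variance-reduction rate for $\Phi_k$, and doing so without losing additional logarithmic factors, is the technical heart of the argument and where Banaszczyk's convex-geometric proof must be replicated algorithmically.
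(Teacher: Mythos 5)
Your proposal goes down a different path than the paper and, more importantly, the central step is left unresolved---and it is precisely the step the paper identifies as the barrier that previous approaches could not cross.

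You propose a single-shot walk that becomes orthogonal to \emph{dangerous} rows, i.e.\ rows $j$ with $|\langle v_j,x(k)\rangle|\ge\alpha\sqrt{\log n}$, and you observe that everything hinges on showing $\dim W_k=\Omega(|F_k|)$ throughout, i.e.\ that few rows are dangerous simultaneously. This is essentially the Lovett--Meka mechanism, and the paper's overview (``The $O(t^{1/2}\log n)$ barrier'') explains why it stalls: there is no control over \emph{which} rows hit the threshold relative to how many of their variables have actually been colored. A row can become dangerous while it still has nearly all $O(t)$ of its alive variables, and nothing prevents a large fraction of rows from doing so at once; the $\cosh$-potential you invoke is not worked out and it is not clear it can rule this out without losing the very $\sqrt{\log n}$ factor you are trying to save. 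You flag this yourself as ``the technical heart'' and leave it as the self-referential coupling to be managed---that is the gap.

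The paper's algorithm resolves this by \emph{not} conditioning on current discrepancy at all. Rows are forced to zero discrepancy increment only while they are \emph{big} (have $\ge at$ alive variables), which bounds their number by $|A_k|/a$ unconditionally via the sparsity $t$. Once a row becomes small, the SDP instead enforces two per-step constraints: a \emph{proportional discrepancy} constraint $\bigl(\sum_{i\in S}\Delta x_k(i)\bigr)^2\le 2\sum_{i\in S}\Delta x_k(i)^2$, tying the discrepancy increment to the energy injected, and an \emph{approximate orthogonality} constraint $\bigl(\sum_{i\in S}x_{k-1}(i)\Delta x_k(i)\bigr)^2\le 2\sum_{i\in S}\Delta x_k(i)^2$, tying the injected energy to the actual energy increase of the row. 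Feasibility with value $\Omega(|A_k|)$ is proved by SDP duality plus a singular-value/subspace argument (Lemma~\ref{col1matr}, Theorem~\ref{bigsubspace}). The tail bound for each row then comes from Freedman's martingale inequality applied to the discrepancy process \emph{and} to the linear-energy process, using the fact that a small row's energy can rise by at most $at$. None of these ingredients appears in your proposal, and they are exactly what lets the analysis bound each row's discrepancy by $O(\sqrt{t\log n})$ over the \emph{entire} walk rather than $O(\sqrt{t})$ per phase times $\log n$ phases.
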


Our result also extends to the Koml\'{o}s setting with some minor modifications.

\begin{thm}
\label{komlos}
Given an $m\times n$ matrix $A$ with all columns of $\ell_2$-norm at most $1$, there is an efficient randomized algorithm that finds $x\in\{-1,1\}^n$ such that $\| Ax\|_\infty=O(\sqrt{\log n})$ with high probability.
\end{thm}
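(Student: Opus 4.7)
The plan is to obtain Theorem~\ref{komlos} as a consequence of (a slight strengthening of) the algorithm underlying Theorem~\ref{thm:main}, by iterating a partial-coloring subroutine over $O(\log n)$ phases. First I would extract from the proof of Theorem~\ref{thm:main} a partial-coloring routine that, given a fractional point $x \in [-1,1]^n$ with $n'$ coordinates still in $(-1,1)$, produces an update $y \in [-1,1]^n$ in which at least $n'/2$ new coordinates hit $\pm 1$, while the incremental contribution $A(y-x)$ has Gaussian-type tails on each row. In the Beck--Fiala setting, the $O(\sqrt{t \log n})$ bound is obtained by summing such per-phase discrepancies, so the point is to re-read the routine so that its per-row guarantee is phrased in terms of the $\ell_2$-norm of each row restricted to the alive coordinates, rather than in terms of the number of alive entries in its support.

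Next I would apply this routine iteratively to the Komlós matrix $A$, starting from $x^{(0)} = 0$. After phase $k$ the alive set $V_k$ satisfies $|V_k| \le n/2^k$, and crucially the restricted matrix $A|_{V_k}$ still has every column of $\ell_2$-norm at most $1$. Hence $\sum_{j} \|A_{j, V_k}\|_2^2 \le |V_k|$, so at each phase only a controlled number of rows are ``heavy'' in the restricted norm, and the rest contribute negligibly. Summing the per-phase Gaussian increments with a Chernoff/union bound over phases and rows, together with a dyadic decomposition of the rows by their restricted $\ell_2$-norm, then yields $\|A x^{(K)}\|_\infty = O(\sqrt{\log n})$ with high probability, matching Banaszczyk's non-constructive bound.

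The main obstacle I anticipate lies in the first step: the algorithm of Theorem~\ref{thm:main} is tuned to $\ell_1$-style sparsity, and to import it into the Komlós setting I need to replace its set-cardinality thresholds by $\ell_2$-norm thresholds on the restricted rows, while ensuring that the subspace in which the underlying random walk lives still has sufficiently large dimension to make constant progress per phase. A secondary technical point is that $m$ may be arbitrary: since the Komlós bound is independent of $m$, the concentration argument must rely purely on the structural bound $\sum_{j} \|A_{j, V_k}\|_2^2 \le |V_k|$, so that effectively only $O(|V_k|)$ rows need explicit control per phase. Once these two adjustments go through, the iteration closes and the theorem follows.
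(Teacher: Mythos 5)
Your proposal rests on the iterated partial-coloring template, which is precisely the approach the paper's introduction identifies as fundamentally stuck at the $O(\log n)$ bound for Koml\'{o}s (and $O(t^{1/2}\log n)$ for Beck--Fiala). You also mischaracterize the proof of Theorem~\ref{thm:main}: it is \emph{not} obtained by ``summing per-phase discrepancies''; the algorithm is a single random walk run to completion, and the bound comes from one application of Freedman's martingale inequality over the whole walk, with the proportional-discrepancy and approximate-orthogonality SDP constraints tying the accumulated variance to the total energy increase of a set, which is $O(t)$. There is no phase decomposition in the analysis.

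The concrete gap in your iteration scheme is the step from ``$\sum_j \|A_{j, V_k}\|_2^2 \le |V_k|$'' to a per-row geometric decrease. The bound you cite is only an aggregate statement: the identity of the heavy rows can rotate from phase to phase, and an individual row $j$ can have $\|A_{j, V_k}\|_2 = \Theta(1)$ for $\Theta(\log n)$ consecutive phases while still only losing a small fraction of its alive mass each phase. Then $\sum_k \|A_{j, V_{k-1}}\|_2^2 = \Theta(\log n)$, the accumulated sub-Gaussian scale for row $j$ is $\Theta(\sqrt{\log n})$, and after a union bound over rows you are back at $O(\log n)$. This is exactly the ``imbalance between discrepancy incurred and progress made for each set'' barrier spelled out in Section~\ref{s:overview}, and your proposal does not supply a mechanism to overcome it. The paper's actual Koml\'{o}s extension keeps the one-shot walk and instead modifies the SDP: a row is ``big'' when its restricted squared $\ell_2$-norm exceeds a constant $a$, and for each active row one adds proportional-discrepancy and approximate-orthogonality constraints for every $\lambda$-truncation $r_j^\lambda$ (dropping entries with $|b_{ji}| > 4a/\lambda$). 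The large entries are handled separately via Observation~\ref{obs:large}, which shows they contribute at most $\lambda/2$ to the discrepancy once the row is active; the truncated part is then controlled by the same Freedman argument as in the Beck--Fiala case. Your proposal contains none of the truncation machinery, and without it even the one-shot walk would fail because Freedman's inequality degrades when the martingale has large increments coming from large $|b_{ji}|$.
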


Our algorithm gives a new constructive proof of Banaszczyk's result for the Beck-Fiala and Koml\'{o}s setting.
While Theorem \ref{komlos} implies Theorem \ref{thm:main}, for better clarity we first present the algorithm for the Beck-Fiala problem in Sections \ref{s:bf} and \ref{s:analysis} and then discuss the extension to Theorem \ref{komlos} in Section \ref{s:komlos}.

\subsection{High-level Overview}
\label{s:overview}
The algorithm has a similar structure to the previous random walk based approaches \cite{B10,LM12,HSS14}.
It starts with the coloring $x_0=0^n$ at time $0$, and at each time step $k$, updates the color of element $i$
by adding a small increment to its coloring at time $k-1$, i.e.~$x_k(i)=x_{k-1}(i) + \Delta x_k(i)$. 
If a variable reaches $-1$ or $1$ it is frozen, and its value is not updated any more. 
The increment is determined by solving an appropriate SDP and projecting the resulting vectors in a random direction.


However, all the previous approaches get stuck at the $O(t^{1/2} \log n)$ barrier, and it is instructive to understand why this happens before we present our algorithm.

\paragraph*{The $O(t^{1/2} \log n)$ barrier}
 
Roughly speaking, the execution of the previous algorithms can be divided into $O(\log n)$ phases (either implicitly or explicitly), where in each phase about half the variables get frozen and  each set incurs an expected discrepancy of $O(t^{1/2})$. This gives an overall discrepancy bound of $O(t^{1/2} \log n)$. 

Intuitively however, for a fixed set $S$, one should expect an $O(t^{1/2})$ discrepancy over {\em all} the phases for the following reason.
Assume that all sets are of size $O(t)$. This can be ensured using a standard linear algebraic argument to ensure that sets incur zero discrepancy as long as they have at least $2t$ uncolored elements\footnote{The reader may observe that if all sets were of size $O(t)$, a simple application of the Lov\'{a}sz Local Lemma already gives an $O(\sqrt{t \log t})$ discrepancy coloring, so this should an imply an $O(\sqrt{t \log t})$ discrepancy in general. 
However, the problem is that the Lov\'{a}sz Local Lemma does not combine with the linear algebraic argument.}. 
After $i$ phases of partial coloring, one would expect that $S$ has about $2^{-i}$ fraction of its elements left uncolored, and hence it should incur about $O((2^{-i}t)^{1/2})$ discrepancy in the next phase, giving a total discrepancy of $O(\sum_i 2^{-i/2} t^{1/2}) = O(t^{1/2})$. 
 

However, the problem is that the size of sets may not evolve in this ideal manner, as the partial coloring phase does not give us a fine-grained control over how the elements of each set get colored. For example, even though half of the variables (globally) get colored during a phase, it is possible that half the sets get almost completely colored, while the other half only get $t/\log n$ of their elements colored (while still incurring an $\Omega(t^{1/2})$ discrepancy). 
This imbalance between the discrepancy incurred and ``progress" made for {\em each} set is the fundamental barrier in overcoming the $O(t^{1/2} \log n)$ bound.

\paragraph*{Our approach}
The key idea of our algorithm is to ensure that during the coloring updates the {\em squared} discrepancy we add to a set 
is proportional to the ``progress" elements of that set make towards geting colored. More formally, the updates $\Delta x_k(i)$ that we choose at time $k$ satisfy the following properties:

\begin{enumerate}
\item {\em Zero Discrepancy for large sets:} If a set $S$ has more than $at$ unfrozen (alive) elements at time $k$, for some constant $a$, we ensure that $\sum_{i \in S} \Delta x_k(i) =0$. This is similar to the previous approaches and allows us to  not worry about the discrepancy of a set until its size falls below $at$.

\item {\em Proportional Discrepancy Property:} This is the key new property and (roughly speaking) ensures that the squared discrepancy added to a set is proportional to the ``energy" injected into the set. That is,
\[   \left( \sum_{i \in S} \Delta x_k(i) \right)^2 \leq  2  \left(\sum_{i \in S}  \Delta x_k(i)^2 \right).\]
Note that the left hand side is the square of the  discrepancy increment for set $S$,
and the right hand side is the sum of squares of the increments of the elements of $S$.

Given a coloring $x_k$, let us define the energy of set $S$ at time $k$ as $\sum_{i \in S} x_k(i)^2$. Clearly, the energy of a set can never exceed its size $|S|$. As we can assume that $|S|=O(t)$ (by the Zero Discrepancy Property above), this property  suggests that if the total energy injected ($\sum_k (\sum_{i \in S}  \Delta x_k(i)^2)$) into $S$ was comparable to its final energy (which is $O(t)$) and the increments were mean $0$ random variables,  the squared discrepancy should be $O(t)$.

\item {\em Approximate Orthogonality Constraints} to relate the injected energy to actual energy:
One big problem with the above idea is that the total injected energy into a constraint may be unrelated to its final energy. For example, even 
for a single variable $i$ if the coloring $x_k(i)$ ``fluctuates" a lot around $0$ over time, the injected energy $\sum_k \Delta x_k(i)^2$
could be arbitrarily large, while the final energy for $i$ is at most $1$. For general sets $S$, other problems can arise beyond just fluctuations due to correlations between the updates of different elements of $S$.

To fix this we use the following idea. Suppose we could ensure that for each set $S$ the coloring update at time $k$ was orthogonal 
to the coloring at time $k-1$, i.e.~$ \sum_{i \in S} x_{k-1}(i) \Delta x_k(i) =0$. Then, by Pythagoras theorem, the increase in energy of $S$ would satisfy
\begin{eqnarray}
 \sum_{i \in S} x_k(i)^2 -  \sum_{i \in S} x_{k-1}(i)^2  
 &=  &  \sum_{i \in S} \left((x_{k-1}(i)  + \Delta x_k(i))^2 - x_{k-1}(i)^2 \right) \nonumber \\
& = &  2 \sum_{i \in S} x_{k-1}(i) \Delta x_k(i) + \sum_{i \in S} \Delta x_k(i)^2  \nonumber \\
& = &  \sum_{i \in S} \Delta x_k(i)^2 \label{eq:e-inject}
\end{eqnarray}
where the last equality follows from the orthogonality constraint.
As the expression in \eqref{eq:e-inject} is the injected energy at time $k$, this would precisely make the total injected equal to the final energy as desired.

However, we cannot add such constraints directly for each small set as there might be too many of them. So the 
idea is to add a weaker version of these orthogonality constraints, where we only require that 
 \[ \left(\sum_{i \in S} x_{k-1}(i) \Delta x_k(i)\right)^2 \le 2  \left( \sum_{i \in S} \Delta x_k(i)^2\right) \]
and show that these suffice for our purpose.
\item {\em Sufficient Progress Property}: Of course, all the properties above can be trivially satisfied  by setting $\Delta x_k(i)=0$ for each $i$. So the final step is to ensure that a non-trivial update exists. To this end, we show that 
there exist updates with the (unnormalized) sum $\sum_{i} \Delta x_k(i)^2 = \Omega(A_k)$, where $A_k$ is the number of alive variables at time $k$.

For this purpose, we write an SDP that captures the above constraints and use duality to show the existence of a large feasible solution.
\end{enumerate}

A weaker version of these properties was used in the unpublished manuscript \cite{BG16} to get a more size-sensitive discrepancy bound for each set, but it still only achieved an $O(t^{1/2}\log n)$ discrepancy in the worst case.

We now describe the algorithm and the SDP we use in Section \ref{s:bf}. The analysis consists of two main parts. In Section \ref{s:sdpfeasibility}
we show the sufficient progress property mentioned above, and in Section \ref{s:disc} we show how this gives an overall discrepancy bound of $O((t \log n)^{1/2})$.

\section{Algorithm for the Beck-Fiala Problem}
\label{s:bf}

We will index time by $k$.
Let $x_k\in [-1,1]^n$ denote the coloring at the {\em end} of time step $k$. During the algorithm,
variables which get set to at least $(1-1/n)$ in absolute value are called frozen and their values are not changed anymore. The remaining variables are called alive. We denote by $A_k$ the set of alive variables at the {\em beginning} of time step $k$. Initially all variables are alive. Let $\gamma = 1/(n^2 \log n)$, $T= (12/\gamma^2) \log n$ and $a=6$.

We will call a set $S\in\mathcal{S}$ {\em big} at time $k$ if it has at least $at$ variables alive at time $k$, i.e.~$|S\cap A_k| \geq at$ and {\em small} otherwise.
We will use $\mathcal{B}_k$ to denote the collection of big sets at time $k$ and $\mathcal{L}_k$ to denote the  collection of small (little) sets.

\medskip


\noindent 
{\bf Algorithm:}
\begin{enumerate}
\item Initialize $x_0(i) =0$ for all $i\in [n]$ and $A_1=\{1,2,...,n\}$. 
\item  For each time step $k=1,2,\ldots, T$ repeat the following:
\begin{enumerate}
\item \label{apx:step1}
 Find a solution to the following semidefinite optimization problem:
\begin{eqnarray}
	\textrm{Maximize }\sum_{i\in A_k} \|u_i\|_2^2 \nonumber\\
	\textrm{s.t.} \qquad
	\label{sdp1}  \|\sum_{i \in S\cap A_k} u_i\|_2^2 & = &  \ 0 \qquad  \textrm{for each }  S \in \mathcal{B}_k\\
	\|\sum_{i \in S\cap A_k} u_i\|_2^2 & \leq &  \ 2\sum_{i\in S\cap A_k}\| u_i\|_2^2 \qquad   \textrm{for each }  S \in \mathcal{L}_k \label{sdp3}  \\
	\| \sum_{i\in S\cap A_k}x_{k-1}(i) u_i\|_2^2  & \leq & \ 2\sum_{i\in S\cap A_k}\| u_i\|_2^2 \qquad   \textrm{for each }  S \in \mathcal{L}_k \qquad \label{sdp4}  \\
   	\|u_i\|_2^2 & \le &  1  \qquad \forall i \in A_k \nonumber 
\end{eqnarray}
\item 
\label{apx:round}
Let $r_k \in \R^n$ be a random $\pm 1$ vector, obtained by setting each coordinate $r_k(i)$ independently to $-1$ or $1$ with probability $1/2$.

For each $i \in A_k$, update $x_k(i)=x_{k-1}(i)+\gamma\langle r_k, u_i\rangle$. 
For  each $i \not\in A_k$, set $x_k(i)=x_{k-1}(i)$.

\item
\label{apx:rnd}
 Initialize $A_{k+1}=A_{k}$.

For each $i$, if $|x_k(i)|\geq 1-1/n$, update $A_{k+1} = A_{k+1} \setminus \{i\}$.
\end{enumerate}
\item 
\label{stp3}
Generate the final coloring as follows.
For the frozen elements $i\notin A_{T+1}$, set $x_T(i)=1$ if $x_T(i)\geq 1-1/n$ and $x_T(i)=-1$ otherwise.
For the alive elements $i \in A_{T+1}$, 
set them arbitrarily to $\pm 1$. 
\end{enumerate}

Note that the SDP at time $k$ uses the vectors $u_i$ to generate the update $\Delta x_k(i)$ by projecting $u_i$ to the random vector $r_k$ and 
scaling this by $\gamma$. If we think of $u_i$ as one dimensional vectors (so $\Delta x_k(i) = \gamma r u_i$ where $r$ is randomly $\pm 1$), 
constraints \eqref{sdp1} will ensure that a set incurs zero discrepancy as long as it is big.
Constraints \eqref{sdp3} require the updates to satisfy the  proportional discrepancy property mentioned earlier.
Constraints \eqref{sdp4}  require the updates to satisfy the approximate orthogonality property mentioned earlier.

 \section{Analysis}
\label{s:analysis}

We begin with some simple observations.

\begin{lemma}
\label{lem:tri}
For any vector $u \in \R^n$ and a random vector $r \in \{\pm 1\}^n$,  $\E[\langle r,u\rangle^2] = \|u\|_2^2$   
and $|\langle r,u\rangle | \leq \sqrt{n}\|u\|_2$.
\end{lemma}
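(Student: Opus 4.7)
The plan is to handle the two claims separately, since both are standard one-line calculations.

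For the first identity, I would expand $\langle r,u\rangle^2 = \bigl(\sum_{i=1}^n r_i u_i\bigr)^2 = \sum_{i,j} r_i r_j u_i u_j$ and take expectations term by term. Since the coordinates $r_i$ are independent $\pm 1$ random variables, $\E[r_i r_j] = 0$ whenever $i \neq j$, and $\E[r_i^2] = 1$ because $r_i^2 = 1$ deterministically. Thus only the diagonal terms survive, giving $\E[\langle r,u\rangle^2] = \sum_i u_i^2 = \|u\|_2^2$.

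For the second (deterministic) inequality, I would simply invoke Cauchy--Schwarz: $|\langle r,u\rangle| \le \|r\|_2 \cdot \|u\|_2$, and note that $\|r\|_2 = \sqrt{n}$ since every coordinate of $r$ has absolute value $1$. This yields $|\langle r,u\rangle| \le \sqrt{n}\,\|u\|_2$, as required.

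There is really no main obstacle here; both parts are immediate from the definition of the inner product, independence of the coordinates, and Cauchy--Schwarz. The only thing to be careful about is spelling out that $r_i^2 \equiv 1$ (rather than just in expectation), which is what makes the off-diagonal cross terms vanish cleanly in expectation while also making the bound $\|r\|_2 = \sqrt{n}$ hold pointwise rather than in expectation.
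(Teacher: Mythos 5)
Your proposal is correct and uses the same argument as the paper: expand the square and use independence plus $r_i^2 \equiv 1$ for the expectation identity, and Cauchy--Schwarz with $\|r\|_2 = \sqrt{n}$ for the pointwise bound.
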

\begin{proof}
Writing $u$ in terms of its coordinates $u=(u(1),\ldots,u(n))$, 
\begin{align*}
\E[\langle r,u\rangle^2] =  \E [(\sum_i  r(i) u(i) )^2]  = \sum_{i,j} \E[r(i)r(j)] u(i)u(j)  = \|u\|_2^2 
\end{align*} 
where the last equality uses that $\E[r(i)r(j)]=0$ for $i \neq j$  and $\E[r(i)^2]=1$.

The second part follows by Cauchy-Schwarz inequality, as $|\langle r,u\rangle | \leq \|r\|_2 \|u\|_2 = \sqrt{n}\|u\|_2$.
\end{proof}

This implies the following.

\begin{obs}
The rounding of frozen elements in step \ref{stp3} of the algorithm affects the discrepancy of any set by at most $n \cdot (1/n)=1$. So we can ignore this rounding error.
Moreover, as $\|u_i\|_2 \leq 1$, $|\gamma \langle r,u_i \rangle |\leq \gamma\sqrt{n} \|u_i\|_2 \leq 1/n$, which implies that no $x_k(i)$ goes out of the range $[-1,1]$ during any step of the algorithm.
\end{obs}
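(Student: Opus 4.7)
The observation has two independent claims. I would address them separately, and both reduce to elementary bookkeeping once Lemma \ref{lem:tri} is in hand.

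For the second claim (the range statement), the plan is to bound the per-step increment $\gamma\langle r_k,u_i\rangle$ in absolute value. By the second part of Lemma \ref{lem:tri}, since the SDP constraint forces $\|u_i\|_2 \le 1$, we have $|\langle r_k,u_i\rangle|\le \sqrt{n}$. Multiplying by $\gamma = 1/(n^2\log n)$ gives $|\gamma\langle r_k,u_i\rangle|\le \sqrt{n}/(n^2\log n)\le 1/n$ (for $n\ge 1$). Then I would argue inductively: at the beginning of step $k$, any alive coordinate $i\in A_k$ must satisfy $|x_{k-1}(i)|<1-1/n$ (otherwise it would already have been removed from the alive set in step \ref{apx:rnd} of the previous iteration), so by the triangle inequality $|x_k(i)|\le |x_{k-1}(i)| + |\gamma\langle r_k,u_i\rangle|\le (1-1/n)+1/n = 1$, and frozen coordinates are not updated at all.

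For the first claim (the rounding error), I would observe that frozen elements by definition have $|x_T(i)|\ge 1-1/n$, so rounding each to $\pm 1$ in step \ref{stp3} changes its value by at most $1/n$. For an arbitrary set $S\in\mathcal{S}$, the induced change in $\sum_{i\in S}x_T(i)$ is bounded in absolute value by the sum of the per-coordinate changes, which is at most $|S|\cdot(1/n) \le n\cdot(1/n)=1$. Hence this rounding step can only shift each set's discrepancy by an additive constant of at most $1$, which is absorbed into the $O(\cdot)$ in the final bound.

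There is no real obstacle here: the entire observation is a sanity check that the parameter $\gamma$ was chosen small enough to keep the walk inside $[-1,1]^n$ and that the freezing threshold $1-1/n$ is tight enough that the final snapping to $\pm 1$ is harmless. The only thing to double-check is the arithmetic with $\gamma$, $T$, and the freezing threshold, all of which were set precisely so that both inequalities hold with room to spare; I would not expect any hidden subtlety. (Note that alive coordinates in $A_{T+1}$ are handled by the separate ``set arbitrarily'' clause, and the analysis elsewhere will have to show that either $A_{T+1}=\emptyset$ with high probability or the number of such leftover coordinates is negligibly small; this observation does not need to address them.)
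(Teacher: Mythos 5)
Your proposal is correct and follows essentially the same route as the paper, which derives both claims directly from Lemma~\ref{lem:tri} together with the freezing threshold $1-1/n$ and the choice of $\gamma$. You supply a slightly more explicit inductive argument for the range claim and correctly flag that the elements left alive in $A_{T+1}$ are handled separately by the termination analysis, but there is no substantive divergence from the paper.
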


The rest of the analysis is divided into three parts. In Section \ref{s:sdpfeasibility}, we show that the SDP is feasible and has value at least $|A_k|/3$ at each time step $k$. In Section \ref{s:disc}, we use the properties of the SDP to show that each set in $\mathcal{S}$ has discrepancy $O((t\log n)^{1/2})$ after $T$ steps with high probability. 
Finally, in Section \ref{s:terminate} we show that there are no alive elements after $T$ steps with high probability. Together these will imply Theorem \ref{thm:main}.

\subsection{SDP is feasible and has value $\Omega(|A_k|)$}
\label{s:sdpfeasibility}

 
To show that the SDP has value at least $|A_k|/3$ at any time step $k$, we will consider the dual and show that no solution with objective value less than $|A_k|/3$ can be feasible. 
By strong duality, this suffices as if the optimum (primal) SDP solution was less than $|A_k|/3$, there would also be some feasible dual solution with that value (provided Slater's conditions are satisfied). 

It might be useful to point out here that the feasibility of our SDP is incomparable to the main result in \cite{N13}; we can ensure a zero discrepancy to a few rows, which was not possible in the approach used in \cite{N13} but we can only ensure a partial colouring ($\sum_i \| u_i\|_2^2 \ge |A_k|/3$), whereas the SDP in \cite{N13} was feasible with the stronger constraint $\| u_i\|_2=1$ for all $i$.


To make  it easier to write the dual, we rewrite the SDP in  the following matrix notation by setting $X$ to be the Gram matrix of vectors corresponding to alive elements i.e. $X_{ij}=\langle u_i,u_j\rangle$ for $i,j\in A_k$. 
\begin{eqnarray*}
	\textrm{Maximize } I\bullet X && \quad \textrm{subject to}  \\
	v_{S}v_{S}^T\bullet X & = &  \ 0 \qquad \textrm{for each }  S \in \mathcal{B}_k \\
	(v_{S}v_{S}^T-2I_{S})\bullet X & \leq &  \ 0 \qquad \textrm{for each }  S \in \mathcal{L}_k \\
	(x_{S}x_{S}^T-2I_{S})\bullet X & \leq &  \ 0 \qquad  \textrm{for each }  S \in \mathcal{L}_k \\
	(e_ie_i^T)\bullet X & \leq &  \ 1 \qquad \forall i \in A_k \nonumber \\
   	X & \succeq & \ 0
\end{eqnarray*}

Here $v_{S}$ is the indicator vector of set $S\cap A_k$, $x_{S}$ is the vector with $i^{th}$ entry equal to $x_{k-1}(i)$ if $i\in S\cap A_k$ and $0$ otherwise and $I_{S}$ is the identity matrix restricted to set $S\cap A_k$, i.e.~$(I_{S})_{ii}=1$ if $i\in S\cap A_k$ and $0$ otherwise. $\bullet$ denotes the usual inner product on matrices $ A \bullet B = \textrm{Tr}(A^TB) = \sum_{ij} A_{ij}B_{ij}$.

We can write the dual of the above SDP (for reference, see \cite{GM12}), which is given by:
\begin{eqnarray}
	\textrm{Minimize } \sum_{i\in A_k} b_i \nonumber\\
	\textrm{s.t.} \qquad 
	\sum_{i\in A_k}  b_i e_ie_i^T  +   \sum_{S \in \mathcal{B}_k}\alpha_{S} v_{S}v_{S}^T & + & \sum_{S \in \mathcal{L}_k}\left(\beta_{S} (v_{S}v_{S}^T-2I_{S})+\beta_{S}^x(x_{S}x_{S}^T-2I_{S})\right)  \succeq   \ I \qquad
	  \label{dualconstr}    \\
	 b_i  &\geq &  \ 0 \qquad \forall i \in  A_k  \label{bval} \\
   	 \alpha_{S}  &\in &   \ \mathbb{R} \qquad \forall S \in \mathcal{B}_k \label{alphaval} \\
	 \beta_{S},\beta_{S}^x  &\geq &  \ 0 \qquad \forall  S \in \mathcal{L}_k  \label{betaval}
\end{eqnarray}
Here $A\succeq B$ denotes that the matrix $A-B$ is positive semi-definite. To show strong duality we use the following result. 


\begin{thm}[Theorem 4.7.1, \cite{GM12}]
\label{thm:duality}
If the primal program $(P)$ is feasible, has a finite optimum value $\eta$ and has an interior point $\tilde{x}$, then the dual program $(D)$ is also feasible and has the same finite optimum value $\eta$.
\end{thm}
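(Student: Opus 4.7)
The plan is to establish the theorem in two stages: weak duality, which is a direct consequence of combining \eqref{dualconstr} with primal feasibility, followed by strong duality, proved by a separating-hyperplane argument in which the interior point $\tilde x$ is used to exclude a degenerate separator. Weak duality is straightforward: for any primal feasible $X$ and dual feasible $(b,\alpha,\beta,\beta^x)$, form the inner product of both sides of \eqref{dualconstr} with $X$. The right-hand side equals $I\bullet X$ since $X \succeq 0$; the left-hand side is at most $\sum_i b_i$, because the equality slacks vanish, the inequality slacks are nonpositive (with multipliers $\beta_S,\beta_S^x \geq 0$), and $e_ie_i^T \bullet X \leq 1$ with $b_i \geq 0$. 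Every dual feasible value therefore upper bounds $\eta$, so it remains to produce a dual feasible point of value exactly $\eta$.

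For strong duality I would introduce the convex set
\[
G \ =\ \bigl\{\,\bigl(\eta - I\bullet X + t,\ \mathcal{A}(X) - c + s\bigr)\ :\ X \succeq 0,\ t > 0,\ s \in C\,\bigr\}\ \subseteq\ \R \times \R^N,
\]
where $\mathcal{A}$ collects the primal constraint left-hand sides, $c$ the right-hand sides, and $C$ is the cone of admissible slacks (zero on equality coordinates and nonnegative elsewhere). Optimality of $\eta$ implies $0 \notin G$, since any element reaching the origin would produce a primal feasible $X$ of value $\eta + t > \eta$. A separating hyperplane then yields multipliers $(\lambda,y) \in \R_{\geq 0} \times \R^N$, nonzero, with $y$ having the sign pattern dual to $C$; extremal choices of $X$ and $s$ give the two key relations $\mathcal{A}^*(y) \succeq \lambda I$ and $\lambda \eta \geq y \cdot c$. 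Parsing the coordinates of $y$ as $(b,\alpha,\beta,\beta^x)$ then recovers the dual PSD constraint \eqref{dualconstr} and the sign conditions \eqref{bval}--\eqref{betaval}, up to the scalar $\lambda$.

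The main obstacle is showing $\lambda > 0$, and this is exactly where the interior point $\tilde X$ is indispensable. If $\lambda = 0$ then $y \neq 0$ and $\mathcal{A}^*(y) \succeq 0$; evaluating at $\tilde X$ gives $\mathcal{A}^*(y) \bullet \tilde X = y \cdot \mathcal{A}(\tilde X) \geq 0$, while strict feasibility of $\tilde X$ (combined with the sign pattern on $y$) forces $y \cdot (\mathcal{A}(\tilde X) - c) < 0$ unless $y$ vanishes on all inequality coordinates — and in that residual case linear independence of the equality constraints at the interior point again forces $y = 0$, a contradiction. Hence $\lambda > 0$, and rescaling to $\lambda = 1$ produces a dual feasible point of objective $\sum_i b_i = y \cdot c \leq \eta$; combined with weak duality this yields the dual optimum equal to $\eta$. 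The subtlety is non-negotiable: unlike in linear programming, the image of the PSD cone under $\mathcal{A}$ need not be closed, so strict primal feasibility is what guarantees the dual optimum is both finite and attained.
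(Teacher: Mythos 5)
This theorem is stated in the paper as a black-box citation to Theorem 4.7.1 of the G\"artner--Matou\v{s}ek book \cite{GM12}; the paper gives no proof of it, so there is no in-paper argument to compare against. Your proposal is a reasonable sketch of the standard separating-hyperplane proof of conic strong duality under Slater's condition, which is indeed the route taken in that reference and in most treatments: weak duality by pairing \eqref{dualconstr} with a primal-feasible $X$, then building the convex ``epigraph'' set $G$, applying separation, and using the strict interior point to rule out $\lambda = 0$. That structure is correct and matches the known proof in spirit.

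Two caveats worth flagging. First, the separation step needs a little more care than stated: you need proper (not merely weak) separation of $G$ from the origin to guarantee $(\lambda,y)\neq 0$, which typically uses that $G$ has nonempty interior or that one passes to the closure of the cone generated by $G$. Second, and more substantively, your treatment of the residual $\lambda = 0$ case invokes ``linear independence of the equality constraints at the interior point,'' but that is not a hypothesis of the theorem. If $y$ is supported only on equality coordinates, the Slater point gives $y\cdot(\mathcal{A}(\tilde X)-c)=0$, so strict feasibility alone produces no contradiction; the standard fix either assumes the equality constraints are linearly independent (a harmless normalization since redundant equalities can be dropped) or replaces ``interior point'' by ``relative interior point'' together with a relative-interior argument on the affine hull. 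As an outline, what you wrote is serviceable, but if this were to stand as a self-contained proof, that gap in the degenerate case would need to be closed rather than asserted.
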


\begin{lemma}
The SDP described above is feasible and has value equal to its dual program.
\end{lemma}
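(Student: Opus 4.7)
The plan is to verify the three hypotheses of Theorem~\ref{thm:duality} and then invoke it. Primal feasibility is immediate: the matrix $X=0$ (equivalently $u_i=0$ for every $i\in A_k$) satisfies every equality and inequality constraint trivially. Boundedness of the primal optimum follows from the constraints $e_ie_i^T\bullet X\leq 1$, which force $I\bullet X=\sum_{i\in A_k}\|u_i\|_2^2\leq |A_k|$; since the feasible set is closed and bounded, the optimum is even attained.

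The main obstacle is exhibiting an interior point $\tilde X$. No feasible $X$ can be strictly positive definite on all of $\mathbb{R}^{|A_k|}$, since each equality constraint $v_Sv_S^T\bullet X=0$ combined with $X\succeq 0$ forces $Xv_S=0$ for every $S\in\mathcal{B}_k$. The fix is to take the interior relative to the affine subspace cut out by the equality constraints. Let $W:=\{y\in\mathbb{R}^{|A_k|}:v_S^Ty=0\text{ for all }S\in\mathcal{B}_k\}$ and let $P$ be a matrix whose columns form an orthonormal basis of $W$. Every feasible $X$ can be written as $X=PYP^T$ for some $Y\succeq 0$ on $\mathbb{R}^{\dim W}$, and this substitution eliminates the equality constraints outright, leaving a reformulated SDP in $Y$ with only inequality constraints. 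Slater's condition then reduces to producing a strictly feasible $Y\succ 0$.

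I would try $Y=\epsilon I$ for a small $\epsilon>0$. The unit constraints become $\epsilon\|P^Te_i\|_2^2\leq\epsilon<1$, which is strict. The two families of small-set constraints become $\epsilon\|P^Tv_S\|_2^2\leq 2\epsilon\sum_{i\in S\cap A_k}\|P^Te_i\|_2^2$ and the analogous inequality with $x_S$ replacing $v_S$. The fiddly step I expect to wrestle with is checking strict inequality in these small-set constraints; if any is tight under $Y=\epsilon I$, I would perturb to $Y=\epsilon I+\delta M$ for a carefully chosen small PSD $M$ supported in a direction that strictly decreases the tight side of the offending constraint. Once an interior point is in hand, Theorem~\ref{thm:duality} applies and gives strong duality, so the primal and dual share a common finite optimum, completing the proof.
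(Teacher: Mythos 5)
Your plan applies Theorem~\ref{thm:duality} to the \emph{primal} SDP, and you correctly spot that this fails outright for the naive choice of interior point: every feasible $X$ has $Xv_S=0$ for $S\in\mathcal B_k$, so no $X\succ 0$ is feasible. But the fix you propose — restrict to $W=\{v_S:S\in\mathcal B_k\}^\perp$ via $X=PYP^T$ and take $Y=\epsilon I$ — leaves a real gap that you flag but do not close. The small-set constraint reduces to $\|P^Tv_S\|_2^2 < 2\sum_{i\in S\cap A_k}\|P^Te_i\|_2^2$, and this can be an \emph{equality}: e.g.\ if $|S\cap A_k|=2$ and the projections $P^Te_i$ of the two indices coincide, both sides are $4\|P^Te_1\|_2^2$. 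So $Y=\epsilon I$ is genuinely not interior in general, and the handwave ``perturb by $\delta M$'' is doing all the work — you would still need to argue that a strictly feasible $Y$ \emph{exists at all} in the reformulated problem, which is not obvious and is not established.

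The paper avoids this entirely by applying Theorem~\ref{thm:duality} to the \emph{dual} program (calling it $P$), whose dual is the original SDP. The dual has no troublesome equality constraints: the $\alpha_S$ are free, so setting $\alpha_S=0$ costs nothing; the constraints $b_i\ge 0$, $\beta_S,\beta_S^x\ge 0$ are trivially made strict; and the single matrix inequality $\sum_i b_ie_ie_i^T+\dots\succeq I$ is made strict with room to spare by taking $b_i=1+\epsilon$ and $\beta_S=\beta_S^x=\epsilon/(8n^2)$, since there are at most $n^2$ small sets and each $v_Sv_S^T-2I_S$ and $x_Sx_S^T-2I_S$ has eigenvalues in $[-2,n]$. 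This dualize-first route is the standard trick precisely for SDPs whose primal has equality constraints, and it makes the Slater check one line rather than a perturbation argument that you have not completed. Your boundedness and feasibility observations for the primal are correct, but they are not where the difficulty lies; the interior-point step is, and as written it is incomplete.
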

\begin{proof}
We apply Theorem~\ref{thm:duality}, with $P$ equal to the dual of the SDP. This would suffice as the dual $D$ of $P$ is our SDP. 

We claim that $b_i=1+\epsilon$ for $\epsilon>0$ for all $i\in A_k$, $\alpha_{S}=0$ for all $S\in\mathcal{B}_k$ and $\beta_S=\beta_S^x=\epsilon/(8n^2)$ for all $S\in\mathcal{L}_k$ is a feasible interior point for $P$.
Clearly, this solution is strictly in the interior of the constraints \eqref{bval}-\eqref{betaval}.
That $\eqref{dualconstr}$ is satisfied and has slack in every direction follows as the the number of sets $S$ can be at most 
$t |A_k| \leq tn \leq n^2$, and that for any vector $v$,  $vv^T$ is a rank one PSD matrix with eigenvalue $\|v\|^2_2 \leq n$, and thus all 
eigenvalues of $v_Sv_S^T - 2 I_S$ and $x_Sx_S^T-2I_S$ lie in the range $[-2, n]$. 

As this point has objective value at most $(1+\epsilon)n$ and since $b_i$ are non-negative, $P$ has a finite optimum value.
%
%
\end{proof}

We wish to show that any feasible solution to the dual must satisfy $\sum_i b_i \geq |A_k|/3$. To do this, we will show that there is a large subspace $W$ of dimension at least $|A_k|/3$ where the operator 
\begin{align*}
 \sum_{S \in \mathcal{B}_k}\alpha_{S} v_{S}v_{S}^T  + \sum_{S \in \mathcal{L}_k}\left(\beta_{S} (v_{S}v_{S}^T-2I_{S})+\beta_{S}^x(x_{S}x_{S}^T-2I_{S})\right)
 \end{align*}
is negative semidefinite. This would imply that to satisfy \eqref{dualconstr}, $b_i$'s have to be quite large on average. We first give two general lemmas.
 

\begin{lemma}
\label{col1matr}
Given an $h\times n$ matrix $M$ with columns $z_1,z_2,\dots,z_n$. If $\|z_i\|_2\le 1$ for all $i\in[n]$, then there exists a subspace $W$ of $\mathbb{R}^n$ satisfying:
\begin{enumerate}[i)]
\item $dim(W)\ge \frac{n}{2} $, and
\item $\forall y\in W$, $\|My\|_2^2 \le 2\|y\|_2^2$
\end{enumerate}
\end{lemma}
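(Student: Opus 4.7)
The plan is to exploit the spectrum of the Gram matrix $M^\top M$ and take $W$ to be the span of eigenvectors associated with its small eigenvalues. The key observation is that the trace of $M^\top M$ is controlled directly by the column norm assumption, which forces most eigenvalues to be small.

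Concretely, I would first note that $M^\top M$ is a symmetric positive semidefinite $n\times n$ matrix, and that
\[
\operatorname{tr}(M^\top M) \;=\; \sum_{i=1}^n \|z_i\|_2^2 \;\le\; n,
\]
since $(M^\top M)_{ii} = \|z_i\|_2^2$. Let $\lambda_1 \ge \lambda_2 \ge \cdots \ge \lambda_n \ge 0$ be the eigenvalues of $M^\top M$ with an orthonormal eigenbasis $w_1,\dots,w_n$ of $\mathbb{R}^n$. A simple counting (Markov-type) argument then shows that at most $n/2$ eigenvalues can exceed $2$: if $k$ denotes the number of indices with $\lambda_i > 2$, then $2k < \sum_i \lambda_i \le n$, so $k < n/2$.

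Next I would define $W := \operatorname{span}\{w_i : \lambda_i \le 2\}$. By the preceding count, $\dim(W) \ge n - k \ge n/2$, giving property (i). For property (ii), any $y\in W$ can be written as $y = \sum_{i:\lambda_i\le 2} c_i w_i$, and then by orthonormality of the eigenbasis,
\[
\|My\|_2^2 \;=\; y^\top M^\top M\, y \;=\; \sum_{i:\lambda_i\le 2} \lambda_i c_i^2 \;\le\; 2\sum_{i:\lambda_i\le 2} c_i^2 \;=\; 2\|y\|_2^2,
\]
as required.

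There is no serious obstacle here; the only minor point to be careful about is the integer rounding in the eigenvalue count (the strict inequality $k < n/2$ is what guarantees $\dim(W) \ge n/2$ for both parities of $n$), and the fact that eigenvalues equal to $2$ can safely be included in $W$. The argument is a direct application of the trace bound combined with the spectral theorem for symmetric PSD matrices.
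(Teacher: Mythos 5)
Your proof is correct and is essentially the same argument as the paper's: the paper works with the singular value decomposition of $M$ and you work with the eigendecomposition of $M^\top M$, but these are the same object (the eigenvalues of $M^\top M$ are the squared singular values, and your eigenvectors $w_i$ are the paper's right singular vectors $q_i$), and the trace/counting step is identical.
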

\begin{proof}
Let the singular value decomposition of $M$ be given by $M=\sum_{i=1}^n \sigma_i p_i q_i^T$, where $0\le \sigma_1\le\dots\le\sigma_n$ are the singular values of $M$ and $\{p_i : i\in [n]\}, \{q_i : i\in [n]\}$ are two sets of orthonormal vectors (if $h<n$, some $p_i$'s and the corresponding $\sigma_i$'s will be zero).
Then,
\[
\sum_{i=1}^n\sigma_i^2 =\textrm{Tr}[\sum_{i=1}^n \sigma_i^2 q_iq_i^T]=\textrm{Tr}[M^TM]=\sum_{i=1}^n \|z_i\|_2^2 \le n
\]
So at least $\lceil\frac{n}{2}\rceil$ of the squared singular values $\sigma_i^2$s have value at most $2$, and thus $\sigma_1\le\dots\le\sigma_{\lceil\frac{n}{2}\rceil} \le \sqrt{2}$.
Let $W=\textrm{span}\{q_1,\dots,q_{\lceil\frac{n}{2}\rceil}\}$. For $y\in W$,
\begin{eqnarray}
\|My\|_2^2 & = & \|\sum_{i=1}^n \sigma_ip_iq_i^T y\|_2^2 
  =   \|\sum_{i=1}^{\lceil\frac{n}{2}\rceil} \sigma_ip_iq_i^T y\|_2^2 \nonumber\\
 & \le & \sum_{i=1}^{\lceil\frac{n}{2}\rceil} \sigma_i^2 (q_i^Ty)^2  \quad \textrm{(since $p_i$ are orthonormal)} \nonumber\\
 &\le & 2\sum_{i=1}^{\lceil\frac{n}{2}\rceil}(q_i^Ty)^2 \nonumber \\
 &=& 2\|y\|_2^2 \qquad \textrm{(since $q_i$ are orthonormal)} \nonumber
\end{eqnarray}
\end{proof}

This implies the following result.

\begin{thm}
\label{bigsubspace}
Let $\mathcal{V}$ be any finite collection of vectors  $v_1,\ldots,v_{h}$ in $\mathbb{R}^n$, and 
for each $v \in \mathcal{V}$, there is some non-negative multiplier $\beta_v \geq 0$.
Consider the operator  \[ B = \sum_{v \in \mathcal{V}}  \beta_v \left(  vv^T   - 2 \sum_{i=1}^n \langle v,e_i \rangle^2 e_ie_i^T \right)\]
where $e_i$ are the standard basis of $\mathbb{R}^n$.
Then there exists a subspace $W$ of dimension at least $n/2$ such that  $\langle y,By \rangle \leq 0$ for every $y \in W$, or equivalently $y^TBy \leq 0$  for every $y \in W$.
\end{thm}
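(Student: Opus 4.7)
The plan is to reduce Theorem~\ref{bigsubspace} to Lemma~\ref{col1matr} by a single rescaling of coordinates. First I would expand the quadratic form. Writing $v(i) = \langle v, e_i\rangle$ and $y(i) = \langle y, e_i\rangle$, one gets
\[
y^T B y \;=\; \sum_{v \in \mathcal{V}} \beta_v \bigl(\langle v, y\rangle^2 - 2 \sum_i v(i)^2 y(i)^2\bigr) \;=\; \|My\|_2^2 - 2 \sum_i c_i \, y(i)^2,
\]
where $M$ is the $|\mathcal{V}| \times n$ matrix with rows $\sqrt{\beta_v}\, v^T$, and $c_i := \sum_{v\in\mathcal{V}} \beta_v \, v(i)^2 \ge 0$. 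So the goal becomes exhibiting an $n/2$-dimensional subspace $W$ on which $\|My\|_2^2 \le 2 \sum_i c_i y(i)^2$.

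Next I would rescale the standard basis according to the $c_i$'s. Assume first that all $c_i > 0$, and let $D = \mathrm{diag}(\sqrt{c_1}, \ldots, \sqrt{c_n})$. Substituting $y = D^{-1} z$ turns the desired inequality into $\|M D^{-1} z\|_2^2 \le 2\|z\|_2^2$. The key observation is that the $i$-th column of $M D^{-1}$ has squared norm
\[
\frac{1}{c_i} \sum_{v \in \mathcal{V}} \beta_v v(i)^2 = \frac{c_i}{c_i} = 1,
\]
so every column of $MD^{-1}$ is a unit vector. Lemma~\ref{col1matr} then supplies a subspace $W' \subseteq \mathbb{R}^n$ of dimension at least $n/2$ on which $\|MD^{-1} z\|_2^2 \le 2\|z\|_2^2$, and $W := D^{-1} W'$ has the same dimension and satisfies $y^T B y \le 0$ for all $y \in W$.

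Finally I would handle the degenerate case where some $c_i$ vanish. Let $J = \{i : c_i = 0\}$ and $I = [n] \setminus J$. If $i \in J$, then $v(i) = 0$ for every $v$ with $\beta_v > 0$, so neither $\langle v, y\rangle$ nor $\sum_i v(i)^2 y(i)^2$ depends on $y(i)$; in other words, the quadratic form $y^T B y$ depends only on the coordinates in $I$. Running the rescaling argument above on $\mathbb{R}^I$ yields a subspace $W_I$ of dimension at least $|I|/2$ on which the form is nonpositive. Taking $W := W_I \oplus \mathrm{span}\{e_i : i \in J\}$ then gives a subspace of dimension at least $|I|/2 + |J| \ge n/2$ with $y^T B y \le 0$ throughout, completing the proof. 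The only subtle step is this last reduction; the core argument itself is the one-line column-norm computation that makes Lemma~\ref{col1matr} applicable.
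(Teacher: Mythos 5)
Your proof is correct and is essentially identical to the paper's: you expand $y^TBy$ as $\|My\|_2^2 - 2\|Dy\|_2^2$ with $M$ having rows $\sqrt{\beta_v}\,v^T$ and $D$ the diagonal of column norms of $M$, rescale by $D^{-1}$ to make the columns unit vectors so Lemma~\ref{col1matr} applies, and handle the coordinates where the column norm vanishes by appending $\mathrm{span}\{e_i : i \in J\}$. The only cosmetic difference is that the paper writes $\beta_i^2$ for your $c_i$.
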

\begin{proof}
Let $v_i$ denote $\langle v,e_i\rangle$. We can express $y^TBy$ as 
\begin{align*}
B \bullet yy^T 
&=\sum_v  \beta_v\left(  vv^T \bullet yy^T  -  2 (\sum_i v_i^2 e_ie_i^T ) \bullet yy^T \right) \\
&=\sum_v \beta_v \left( (\sum_i v_i y_i)^2    -  2 \sum_i v_i^2 y_i^2 \right)  
\end{align*}

Construct a matrix $M$ with rows indexed by $v$ for each $v \in \mathcal{V}$ and columns indexed by $i\in [n]$. The entries of $M$ are given by $M_{v,i}=  \beta_v^{1/2} v_i$.
Then, we can write 
\[\sum_v \beta_v \left(\sum_i v_i y_i \right)^2  = \|M y\|_2^2.\]
For each  $i \in [n]$, define $\beta_i^2 = \sum_v \beta_v v_i^2$ as the squared $\ell_2$-norm of column $i$ of $M$, and let  $D$ be an $n\times n$ diagonal matrix with entries $D_{ii}= \beta_i$.
Then,
\begin{align}
\label{eq:exp1}
\sum_v \beta_v ( (\sum_i v_i y_i)^2    -  2 \sum_i v_i^2 y_i^2)  = \|My\|_2^2 - 2  \|Dy\|_2^2
\end{align}
Let $N \subseteq [n]$ be the set of coordinates with $\beta_i > 0$. We claim that it suffices to focus on the coordinates in $N$.
Let us first observe that if $i \notin N$, i.e.~$\beta_i^2=0$, then  we can set $y_i$ arbitrarily as \eqref{eq:exp1} is unaffected. As the directions $e_i$ for $i\in N$ are orthogonal to the directions in $[n]\setminus N$, it 
suffices to show that there is a $|N|/2$ dimensional subspace $W$ in $\textrm{span}\{e_i:i\in N\}$ such that $ \|My\|_2^2 - 2  \|Dy\|_2^2 \leq 0$ for each $y \in W$.
The overall subspace we desire is simply $W \oplus \textrm{span}\{e_i: i \in [n]\setminus N\}$ which has dimension $|N|/2 + (n-|N|) \geq n/2$.

So, let us assume that $N=[n]$ (or equivalently restrict $M$ and $D$ to columns in $N$), which gives us that $\beta_i > 0$ for all $i\in N$ and hence that $D$ is invertible.

Let $M'=M D^{-1} $. The squared $\ell_2$-norm of each column in $M'$ is $\sum_v \beta_v v_i^2/ D_{ii}^2$ which equals $1$, and by Lemma \ref{col1matr}, there is a subspace $W'$ of dimension at least $|N|/2$ such that $\|M'y'\|_2^2 \leq  2 \|y'\|_2^2$ for each $y' \in W'$.
Setting $y=D^{-1}y'$ gives
 \[ \|M y \|_2^2 =  \|M'y'\|_2^2 \leq 2 \|y'\|_2^2 = 2 \|Dy\|_2^2, \]  
and thus $W = \{D^{-1}y': y'\in W'\}$ gives the desired subspace since $\textrm{dim}(W)= \textrm{dim}(W')$. 
\end{proof}

Going back to the dual SDP, this gives the following.
\begin{lemma}
\label{smallsub}
Let  $B_k=\sum_{S \in \mathcal{L}_k} \left(\beta_{S} (v_{S}v_{S}^T-2I_{S})+\beta_{S}^x(x_{S}x_{S}^T-2I_{S})\right)$. Then, there exists a subspace $W \subseteq \mathbb{R}^{|A_k|}$ of dimension at least $|A_k|/2$ such that for all $y\in W$, $y^TB_ky\le 0$.
\end{lemma}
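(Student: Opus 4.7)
The plan is to reduce directly to Theorem \ref{bigsubspace}. The operator in that theorem, applied to a single vector $v \in \mathbb{R}^{|A_k|}$ with multiplier $\beta_v$, has the form
\[ \beta_v \left( vv^T - 2 \sum_{i} \langle v, e_i\rangle^2\, e_i e_i^T \right), \]
so the ``diagonal penalty'' scales with the squared entries of $v$ rather than with $I_{\mathrm{supp}(v)}$. I first match this up with the two families of terms appearing in $B_k$, and then handle the mismatch for the $x_S$ terms using the fact that $|x_{k-1}(i)|\le 1$.

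For the indicator vectors $v_S$ (with $S\in\mathcal{L}_k$), we have $\langle v_S, e_i\rangle^2 = 1$ for $i\in S\cap A_k$ and $0$ otherwise, so $\sum_i \langle v_S, e_i\rangle^2 e_ie_i^T = I_S$. Hence the contribution $\beta_S(v_Sv_S^T - 2I_S)$ already matches the form of the operator in Theorem \ref{bigsubspace} exactly. For the vectors $x_S$, we have $\langle x_S, e_i\rangle^2 = x_{k-1}(i)^2$ for $i\in S\cap A_k$, so the matched operator would be $\beta_S^x(x_Sx_S^T - 2\sum_{i\in S\cap A_k} x_{k-1}(i)^2 e_ie_i^T)$ rather than $\beta_S^x(x_Sx_S^T - 2I_S)$. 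Since $x_{k-1}(i)^2 \le 1$ for every $i$, we have $\sum_{i\in S\cap A_k} x_{k-1}(i)^2 e_ie_i^T \preceq I_S$, which, after multiplying by $-2\beta_S^x \le 0$, gives
\[ \beta_S^x\bigl(x_Sx_S^T - 2I_S\bigr) \;\preceq\; \beta_S^x\Bigl(x_Sx_S^T - 2\sum_{i\in S\cap A_k} x_{k-1}(i)^2 e_ie_i^T\Bigr). \]

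Summing these inequalities over $S\in\mathcal{L}_k$ and adding the (already matching) $v_S$ terms, I obtain $B_k \preceq B_k'$, where
\[ B_k' = \sum_{S\in\mathcal{L}_k}\left[\beta_S\Bigl(v_Sv_S^T - 2\sum_i \langle v_S,e_i\rangle^2 e_ie_i^T\Bigr) + \beta_S^x\Bigl(x_Sx_S^T - 2\sum_i \langle x_S,e_i\rangle^2 e_ie_i^T\Bigr)\right]. \]
Now I apply Theorem \ref{bigsubspace} to the collection $\mathcal{V} = \{v_S : S\in\mathcal{L}_k\}\cup\{x_S : S\in\mathcal{L}_k\}$ (in $\mathbb{R}^{|A_k|}$) with the corresponding nonnegative multipliers $\beta_S,\beta_S^x$. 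The theorem yields a subspace $W\subseteq \mathbb{R}^{|A_k|}$ with $\dim(W)\ge |A_k|/2$ on which $y^T B_k' y \le 0$. Combined with $B_k \preceq B_k'$, this immediately gives $y^T B_k y \le y^T B_k' y \le 0$ for all $y\in W$, as required.

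The only non-routine step is the $x_S$-side mismatch, and the observation that $|x_{k-1}(i)|\le 1$ (which holds throughout the algorithm by the earlier observation about coordinates staying in $[-1,1]$) handles it cleanly because the mismatch appears with the favorable sign $-2\beta_S^x$. No other obstacles are expected.
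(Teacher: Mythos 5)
Your proof is correct and follows essentially the same route as the paper: you dominate $B_k$ by the operator $B_k'$ matching the form in Theorem~\ref{bigsubspace} (using $x_{k-1}(i)^2\le 1$ for the $x_S$ terms), apply the theorem, and conclude by the PSD ordering. The paper's argument is identical, just written as $B\succeq B_k$ instead of $B_k\preceq B_k'$.
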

\begin{proof}
We apply Theorem  \ref{bigsubspace} with vectors $v$ as $v_S$ and $x_S$ for each small set $S \in \mathcal{L}_k$, with the multipliers $\beta_S$ and $\beta_{S}^x$. Then,
\begin{align*}
B & =   \sum_{S \in \mathcal{L}_k} [ \beta_{S} (v_{S}v_{S}^T- 2 \sum_{i\in A_k} \langle v_S,e_i\rangle^2 e_ie_i^T)+
 \beta_{S}^x(x_{S}x_{S}^T-2\sum_{i\in  A_k}  \langle x_S,e_i\rangle^2 e_ie_i^T)] \\
& =   \sum_{S \in \mathcal{L}_k} [\beta_{S} (v_{S}v_{S}^T-2I_{S})+
 \beta_{S}^x(x_{S}x_{S}^T-2 \sum_{i\in S \cap A_k} x_{k-1}(i)^2 e_ie_i^T)] \\
& \succeq  \sum_{S \in \mathcal{L}_k} \left(\beta_{S} (v_{S}v_{S}^T-2I_{S})+\beta_{S}^x(x_{S}x_{S}^T-2I_{S})\right) \\
&=  B_k 
\end{align*}
Here we use that $v_S$ is the indicator vector for set $S\cap A_K$ with entries  $\langle v_S,e_i \rangle=1$ iff $i \in S \cap A_k$ and thus, 
$\sum_{i\in A_k} \langle v_S,e_i \rangle^2 e_ie_i^T =  I_{S}.$
Similarly for the vectors $x_S$, $\langle x_S,e_i\rangle  = x_{k-1}(i)$ for $i\in S\cap A_k$ and $0$ otherwise. 
The last step uses that $ x_{k-1}(i)^2 \leq 1$ and thus
\[- 2\sum_{i\in S\cap A_k} x_{k-1}(i)^2 e_ie_i^T \succeq  -2 I_{S}. \] 
By Theorem \ref{bigsubspace}, there is a subspace $W$ with $\textrm{dim}(W)\geq  |A_k|/2$ such that 
$ y^T By \leq 0$ for each $y\in W$. As $B \succeq B_k$, it also holds that $y^T B_k y \leq 0$ for each $y \in W$. 
\end{proof}

We now come to the main theorem of this subsection.

\begin{thm}
\label{largedual}
At time step $k$, the dual program has value at least $|A_k|/3$.
\end{thm}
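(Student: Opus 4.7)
The plan is to combine the two negative-semidefiniteness directions (one from big sets, one from small sets) to construct a large subspace on which the non-$b_i$ part of the dual constraint is negatively aligned, and then argue that $\sum_i b_i$ must be large by a trace/basis averaging argument.

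First I would bound the number of big sets. Every alive variable is contained in at most $t$ sets, so $\sum_{S \in \mathcal{B}_k} |S \cap A_k| \leq t|A_k|$. On the other hand each big set contains at least $at = 6t$ alive elements, giving $|\mathcal{B}_k| \cdot 6t \leq t|A_k|$, hence $|\mathcal{B}_k| \leq |A_k|/6$. Let $W_1 = \mathrm{span}\{v_S : S \in \mathcal{B}_k\}^{\perp} \subseteq \R^{|A_k|}$; then $\dim(W_1) \geq |A_k| - |A_k|/6 = 5|A_k|/6$, and for every $y \in W_1$ we have $\langle y, v_S\rangle = 0$ for all $S \in \mathcal{B}_k$, so
\[
y^T \Bigl(\sum_{S\in \mathcal{B}_k} \alpha_S v_S v_S^T \Bigr) y = \sum_{S \in \mathcal{B}_k} \alpha_S \langle y, v_S\rangle^2 = 0,
\]
independent of the (unrestricted) signs of the $\alpha_S$.

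Next, by Lemma~\ref{smallsub} there exists a subspace $W_2$ with $\dim(W_2) \geq |A_k|/2$ such that $y^T B_k y \leq 0$ for all $y \in W_2$, where $B_k$ is the small-set contribution. Let $W = W_1 \cap W_2$. By the standard dimension inequality,
\[
\dim(W) \geq \dim(W_1) + \dim(W_2) - |A_k| \geq \frac{5|A_k|}{6} + \frac{|A_k|}{2} - |A_k| = \frac{|A_k|}{3}.
\]
For any $y \in W$, plugging $y$ into the dual constraint \eqref{dualconstr} and using the two facts above gives
\[
\sum_{i \in A_k} b_i y_i^2 \;\geq\; \|y\|_2^2 - y^T \Bigl(\sum_{S \in \mathcal{B}_k}\alpha_S v_S v_S^T\Bigr) y - y^T B_k y \;\geq\; \|y\|_2^2.
\]

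To conclude, pick an orthonormal basis $y^{(1)}, \ldots, y^{(d)}$ of $W$ with $d = \dim(W) \geq |A_k|/3$. Summing the inequality above over $j$:
\[
\sum_{j=1}^d \sum_{i \in A_k} b_i (y^{(j)}_i)^2 \;\geq\; \sum_{j=1}^d \|y^{(j)}\|_2^2 \;=\; d.
\]
Interchanging sums, the left side equals $\sum_i b_i \bigl(\sum_j (y^{(j)}_i)^2\bigr) = \sum_i b_i (P_W)_{ii}$ where $P_W$ is the orthogonal projector onto $W$; since $(P_W)_{ii} \leq 1$ and $b_i \geq 0$, this is at most $\sum_i b_i$. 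Therefore $\sum_i b_i \geq d \geq |A_k|/3$, as desired.

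The main conceptual step is the big-set counting bound, which is what makes the $\alpha_S \in \mathbb{R}$ (unsigned) constraints harmless: the orthogonal complement of their span is already big enough to intersect nontrivially with the $W_2$ produced by Lemma~\ref{smallsub}. The choice $a = 6$ in the algorithm is precisely calibrated so that $5/6 + 1/2 - 1 = 1/3 > 0$, yielding the claimed $|A_k|/3$ bound.
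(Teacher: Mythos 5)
Your proof is correct and follows essentially the same route as the paper: bound $|\mathcal{B}_k| \le |A_k|/a$, intersect the orthogonal complement of $\mathrm{span}\{v_S : S \in \mathcal{B}_k\}$ with the subspace from Lemma~\ref{smallsub}, and extract $\sum_i b_i \ge \dim(W)$ via a trace-type argument. Your final step (summing the quadratic form over an orthonormal basis of $W$ and using $(P_W)_{ii} \le 1$) is just an unpacked version of the paper's trace-of-projection computation; the content is identical.
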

\begin{proof}
As element $i$ in $A_k$ appears in at most $t$ sets, the number of big sets $|\mathcal{B}_k|$ at time step $k$ is at most $|A_k|t/at=|A_k|/a$.
Let $W_1$ be the subspace orthogonal to 
$C=\textrm{span}\{v_S: S \in \mathcal{B}_k\}$. 
Clearly,  $\textrm{dim}(C)\le  |\mathcal{B}_k| \leq |A_k|/a$. 

Let $W_0$ be the subspace guaranteed by Lemma~\ref{smallsub} for matrix $B_k$ such that $\textrm{dim}(W_0)\ge  |A_k|/2$ and for all $y\in W_0$, $y^TB_ky\le 0$.
Define the subspace $W=W_1\cap W_0$.  Then,
\begin{align*}
\textrm{dim}(W) &\ge \textrm{dim}(W_0)-\textrm{dim}(C) 
\ge   |A_k|/2 -|A_k|/a = |A_k|/3 
\end{align*}

Let $P_W$ be the projection operator on 
the subspace $W$. Projecting the dual constraint (\ref{dualconstr}) on to $W$, we get
\[ P_W \left( \sum_{i\in A_k} b_ie_ie_i^T  +\sum_{S \in \mathcal{B}_k}\alpha_{S}v_{S}v_{S}^T+ B_k \right) \succeq   \ P_W	 \]
By linearity of  $P_W$  and as $P_W(v_{S}v_{S}^T) = 0$ for each $S\in \mathcal{B}_k$, this implies 

\[ P_W \left(\sum_{i\in A_k} b_ie_ie_i^T \right)+P_W(B_k ) \succeq P_W \]

Taking trace on both the sides and noting that $\textrm{Tr}[P_W(B_k)]\le 0$ since $y^T B_ky\le 0$ for all $y\in W$, we get
\[\textrm{Tr}\left[P_W \left(\sum_{i\in A_k} b_ie_ie_i^T\right)\right] \ge  \textrm{Tr}[P_W]=\textrm{dim}(W)\]

As $b_i\ge 0$ for all $i\in A_k$, the operator $\sum_{i\in A_k} b_ie_ie_i^T$ is  positive semi-definite. As taking the projection of a positive semi-definite operator can only decrease its trace, we can lower bound the dual objective as
\begin{align*}
\sum_{i\in A_k}b_i  & = \textrm{Tr}\left[\sum_{i\in A_k} b_ie_ie_i^T \right]
\ge \textrm{Tr}\left[P_W\left(\sum_{i\in A_k} b_ie_ie_i^T \right)\right] \ge  \textrm{Tr}[P_W]=\textrm{dim}(W) \ge |A_k|/3 
\end{align*}
which completes the proof.
\end{proof}

\subsection{Bounding the discrepancy}
\label{s:disc}
Let $D_S(k)$ denote the signed discrepancy of set $S\in\mathcal{S}$ at end of time step $k$ i.e.~$D_S(k)=\sum_{i\in S}x_k(i)$.

\noindent
We now show the following key result.
\begin{thm}
\label{thm:discrepancy}
Fix a set $S\in\mathcal{S}$.
Then, for any $\lambda \geq 0 $, the discrepancy of $S$ at time step $T$ satisfies 
\[ \Pr \left[ |D_S(T)| \geq \lambda\sqrt{t} \right] \leq 8 \exp(-\lambda^2/(100a)).\]
\end{thm}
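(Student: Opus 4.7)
The plan is to express $D_S(T)$ as a martingale and control its tail via its quadratic variation, using the SDP constraints in essential ways. From step \ref{apx:round},
\[ D_S(T) \;=\; \gamma\sum_{k=1}^{T} \langle r_k, w_k \rangle, \qquad w_k := \sum_{i \in S \cap A_k} u_i, \]
and the zero-discrepancy constraint (\ref{sdp1}) forces $w_k = 0$ whenever $S \in \mathcal{B}_k$, so the martingale only moves after $S$ first becomes small at some step $k_0$. Since each $r_k$ is an independent symmetric $\pm 1$ vector, the $k$-th increment has mean zero and conditional variance $\gamma^2 \|w_k\|_2^2 \leq 2 Z_k$ by the proportional-discrepancy constraint (\ref{sdp3}), where $Z_k := \gamma^2 \sum_{i \in S \cap A_k} \|u_i\|_2^2$. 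Lemma \ref{lem:tri} further bounds each increment by $\gamma\sqrt n\,\|w_k\|_2$, which is vanishing for the chosen $\gamma$. Hence once the total quadratic variation $V := \sum_{k \geq k_0} Z_k$ is controlled by $O(at)$ with high probability, Freedman's martingale inequality (in its sub-Gaussian regime, because of the negligible increment bound) delivers the tail $\Pr[|D_S(T)| \geq \lambda\sqrt t\,] \leq 2\exp(-\Omega(\lambda^2/a))$ conditional on the event that $V$ is small.

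The heart of the proof is therefore the bound $V = O(at)$ with high probability, and the key tool is an energy argument that critically uses constraint (\ref{sdp4}). Since $A_k$ is monotonically shrinking, $S \cap A_k \subseteq S \cap A_{k_0}$ has size at most $at$ for all $k \geq k_0$. Define the restricted energy
\[ \tilde E(k) \;:=\; \sum_{i \in S \cap A_{k_0}} x_k(i)^2 \;\in\; [0, at]. \]
Expanding via $(x_{k-1}(i) + \Delta x_k(i))^2 - x_{k-1}(i)^2 = 2 x_{k-1}(i) \Delta x_k(i) + \Delta x_k(i)^2$ and restricting to alive indices (frozen ones contribute nothing), one obtains
\[ \tilde E(T) - \tilde E(k_0) \;=\; \underbrace{\sum_{k > k_0}\sum_{i \in S \cap A_k}\! \Delta x_k(i)^2}_{(\mathrm I)} \;+\; 2\underbrace{\sum_{k > k_0}\sum_{i \in S \cap A_k}\! x_{k-1}(i)\,\Delta x_k(i)}_{(\mathrm{II})}. \]
Term $(\mathrm I)$ has conditional expectation $V$ by Lemma \ref{lem:tri}, and its fluctuation around $V$ is a martingale whose per-step range $O(\gamma^2 n \, Z_k)$ is minuscule for the chosen $\gamma$. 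Term $(\mathrm{II})$ is a zero-mean martingale whose $k$-th conditional variance is $\gamma^2 \|\sum_i x_{k-1}(i) u_i\|_2^2 \leq 2 Z_k$ by the approximate-orthogonality constraint (\ref{sdp4}). Since $\tilde E(T) - \tilde E(k_0) \leq at$, rearranging yields $V \leq at + 2|(\mathrm{II})| + o(1)$.

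The main obstacle is that the variance controlling $(\mathrm{II})$ is precisely $V$, which is what we are trying to bound. I would resolve this with a standard stopping-time trick: let $\tau$ be the first step at which the running sum $\sum_{k \leq \tau} Z_k$ exceeds $\Theta := C \cdot at$ for a sufficiently large constant $C$ (or $\tau = T$ otherwise). Applying Freedman's inequality to the stopped martingale $(\mathrm{II})_{k \wedge \tau}$, whose quadratic variation is bounded by $O(\Theta)$ by construction, gives $|(\mathrm{II})_\tau| = O(\sqrt{\Theta \cdot at \cdot \log(1/\delta)})$ with probability $1-\delta$; plugging into the identity above produces $\Theta \leq at + O(\sqrt{\Theta \cdot at \cdot \log(1/\delta)})$, a contradiction for $C$ large enough whenever $\log(1/\delta) = O(\lambda^2/a)$. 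Thus $\Pr[V > \Theta] \leq O(\exp(-\Omega(\lambda^2/a)))$, and combining this with the martingale tail bound on $D_S$ via a union bound yields the claimed $8 \exp(-\lambda^2/(100a))$ bound, where the constants $8$ and $100$ absorb the several union-bounded bad events and the factor-$a$ slack in the variance.
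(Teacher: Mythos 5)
Your proposal follows essentially the same route as the paper: decompose the energy increase of $S$ (restricted to the at most $at$ alive variables at the activation time) into a quadratic part $(\mathrm{I})$, which tracks the predictable quadratic variation $V = Q'_S$, and a linear part $(\mathrm{II})$, which is a martingale whose conditional variance is controlled by $V$ via the approximate-orthogonality constraint \eqref{sdp4}; then combine the a priori bound $\tilde E(T)-\tilde E(k_0)\le at$ with Freedman's inequality to conclude that $V = O(at)$ with high probability, which in turn controls the discrepancy martingale via \eqref{sdp3}. The one place you genuinely diverge from the paper is in how you break the circularity that $(\mathrm{II})$'s variance is bounded by $V$, which is the quantity you want to control. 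The paper does this by a dyadic peeling over the level of $Q'_S(T)$: it sums $\Pr[L_S(T)\le -7\cdot 2^j at,\ Q'_S(T)\le 2^{j+5}at]$ over $j\ge 0$, with each layer handled by one application of Freedman. You propose a stopping-time argument: stop the martingale the first time the running predictable variation exceeds $\Theta = Cat$, apply Freedman to the stopped process (which has variance $O(\Theta)$ by construction), and derive a contradiction. Both are valid and standard; the stopping-time route is arguably cleaner to state, while the peeling argument is more self-contained given that the paper's statement of Freedman (Theorem~\ref{thm:freedman}) is phrased at a fixed terminal time rather than as a maximal inequality. (Your version does still go through with the fixed-time statement, since the stopped process has deterministically bounded predictable variation, but this should be said.)

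Two small issues to fix. First, there is an extraneous factor in your Freedman bound: with the stopped martingale's variance bounded by $O(\Theta)$ you get $|(\mathrm{II})_\tau| = O\bigl(\sqrt{\Theta\log(1/\delta)}\bigr)$, not $O\bigl(\sqrt{\Theta\cdot at\cdot\log(1/\delta)}\bigr)$. With the extra $at$ under the root, the inequality $\Theta \le at + O(\sqrt{\Theta\cdot at\cdot\log(1/\delta)})$ does \emph{not} yield a contradiction for large $at$; with the corrected bound it does. Second, for the contradiction you need $\log(1/\delta) = O(at)$, which only follows from $\lambda = O(a\sqrt t)$. The paper reduces to this regime explicitly via Observation~\ref{obs:smallt} (a set can incur discrepancy at most $2at$ deterministically once active, so $\Pr[|D_S(T)|\ge \lambda\sqrt t]=0$ for $\lambda > 2a\sqrt t$). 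You should make the same reduction explicit; without it the claimed contradiction does not hold for all $\lambda$.
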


Setting $\lambda = O(\log^{1/2}n)$ would imply that with high probability every set has discrepancy $O((t \log n)^{1/2})$ at time $T$. 

Among other things, the proof of Theorem \ref{thm:discrepancy} will use a powerful concentration inequality for martingales due to Freedman that we describe below.

\paragraph*{Martingales and Freedman's inequality}
Let $X_1,X_2,\ldots,X_n$ be a sequence of independent random variables on some probability space, and let 
$Y_k$ be a function of $X_1,\ldots,X_k$. The sequence 
$Y_0,Y_1,Y_2,\ldots,Y_n$ is called a martingale with respect to the sequence $X_1,\ldots,X_n$ if for all $k \in [n]$, 
$\E[|Y_k|]$ is finite and $\E[Y_k| X_1,X_2,...,X_{k-1}]= Y_{k-1}$. We will use $\E_{k-1}[Z]$ to denote $\E[Z | X_1,X_2,...,X_{k-1}]$ where $Z$ is any random variable.

\begin{thm}[Freedman \cite{Freedman}]
\label{thm:freedman}
Let $Y_0,\ldots,Y_n$ be a martingale with respect to $X_1,\ldots,X_n$ such that 
$|Y_k - Y_{k-1}| \leq M$ for all $k$, and let 
\begin{align*}
W_k &= \sum_{j=1}^k \E_{j-1}[(Y_j - Y_{j-1})^2] = \sum_{j=1}^k\textrm{Var}[Y_j|X_{1},\ldots,X_{j-1}].
\end{align*}
Then for all $\lambda \geq 0$  and $\sigma^2 \geq 0$, we have 
\begin{align*}
 \Pr[|Y_n - Y_0| \geq \lambda \textrm{ and } W_n \leq \sigma^2] 
 \le 2 \exp\left(-\frac{\lambda^2}{2 (\sigma^2 + M \lambda/3)} \right).  
\end{align*}
\end{thm}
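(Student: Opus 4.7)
The plan is to apply Freedman's inequality to the discrepancy martingale $D_S(k) = \sum_{i \in S} x_k(i)$. Writing $v_k = \sum_{i \in S \cap A_k} u_i$, the step increment $D_S(k)-D_S(k-1) = \gamma \langle r_k, v_k \rangle$ is mean zero with conditional variance $\gamma^2 \|v_k\|_2^2$. Constraint~\eqref{sdp1} forces $v_k = 0$ while $S$ is big, so all variance is concentrated after the first time $k_S$ at which $S$ becomes small; constraint~\eqref{sdp3} then bounds this variance by $2\gamma^2 U_k$ with $U_k := \sum_{i \in S \cap A_k} \|u_i\|_2^2 \leq at$. Lemma~\ref{lem:tri} gives the pointwise increment bound $|\gamma \langle r_k, v_k \rangle| \leq \gamma\sqrt{2nat}$, which is negligible given $\gamma = 1/(n^2 \log n)$, so the whole argument hinges on bounding the cumulative variance $W_T$ by $O(at)$ with high probability.

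For this I use an energy identity. Summing $x_k(i)^2 - x_{k-1}(i)^2 = 2 x_{k-1}(i)\Delta x_k(i) + \Delta x_k(i)^2$ over $i \in S \cap A_k$ and $k \geq k_S$ gives
\[
\sum_{k \geq k_S} P_k \;=\; \bigl(E_S(T)-E_S(k_S-1)\bigr) - 2 X_T,
\]
where $P_k = \sum_{i \in S \cap A_k} \Delta x_k(i)^2$, $E_S(k) = \sum_{i \in S} x_k(i)^2$, and $X_T = \gamma\sum_{k \geq k_S}\langle r_k, y_k \rangle$ with $y_k = \sum_{i \in S \cap A_k} x_{k-1}(i) u_i$ is an auxiliary mean-zero martingale. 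Since only elements in $S \cap A_{k_S}$ can change energy after time $k_S$ and each contributes at most $1$, we have $E_S(T)-E_S(k_S-1) \leq at$. Using $\E_{k-1}[P_k] = \gamma^2 U_k$ and Doob-decomposing $\sum_{k\geq k_S} P_k = \sum_{k\geq k_S}\E_{k-1}[P_k] + N_T$ for a mean-zero $N_T$ with per-step jumps of order $\gamma^2 n at$ (hence $|N_T| = o(1)$ with overwhelming probability), we obtain
\[
W_T \;\leq\; 2\sum_{k\geq k_S}\E_{k-1}[P_k] \;\leq\; 2at + 4|X_T| + 2|N_T|.
\]

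To control $|X_T|$ I apply Freedman to $X$ itself. Constraint~\eqref{sdp4} gives per-step variance $\gamma^2 \|y_k\|_2^2 \leq 2\gamma^2 U_k$; using the a priori deterministic bound $U_k \leq at$ summed over $T = (12/\gamma^2)\log n$ steps yields cumulative variance $\leq 24\,at\log n$. Freedman then gives $\Pr[|X_T| \geq \mu] \leq 2\exp\!\bigl(-\mu^2/(O(at\log n))\bigr)$, and choosing the threshold $\mu$ so that both $W_T \leq C\cdot at$ and the failure probability is at most $2\exp(-\lambda^2/(100a))$ provides the required variance control. Conditioning on $W_T \leq C\cdot at$, Freedman applied to $D_S(T)$ yields
\[
\Pr\!\left[|D_S(T)| \geq \lambda\sqrt{t},\ W_T \leq C\cdot at\right] \;\leq\; 2\exp\!\left(-\frac{\lambda^2 t}{2(C\cdot at + M\lambda\sqrt{t}/3)}\right) \;\leq\; 2\exp(-\lambda^2/(100a)),
\]
since the $M\lambda\sqrt{t}/3$ term is dominated by $Cat$. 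A union bound over the Freedman failures on $D_S$, $X_T$, and $N_T$ (each contributing a two-sided $2\exp$-type term) then yields the claimed $8\exp(-\lambda^2/(100a))$.

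The main obstacle is the circular dependence between $W_T$ and $|X_T|$: the natural variance bound for $X$ is itself $W_T$, which is precisely what we want to control via $X$. I break the circle by using the crude deterministic $O(at\log n)$ variance bound for $X$ in its Freedman application, paying a single $\log n$ factor that is absorbed into the constant $100$ in the exponent.
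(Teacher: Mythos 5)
The statement you were asked to prove is Freedman's inequality itself (Theorem~\ref{thm:freedman}), which the paper simply cites from \cite{Freedman} and never proves. Your proposal does not prove it: you invoke Freedman's inequality repeatedly as a black box and instead sketch a proof of the discrepancy tail bound (Theorem~\ref{thm:discrepancy}). So with respect to the stated theorem the argument is vacuous --- nowhere do you establish the bound $2\exp\left(-\lambda^2/(2(\sigma^2+M\lambda/3))\right)$ for a general martingale with bounded increments and predictable quadratic variation $W_n$. A genuine proof would proceed along standard lines: show the conditional moment-generating-function bound $\E_{k-1}[e^{s(Y_k-Y_{k-1})}]\le \exp\bigl(g(s)\,\E_{k-1}[(Y_k-Y_{k-1})^2]\bigr)$ for $0<s<3/M$ with $g(s)=\frac{s^2/2}{1-sM/3}$ (using $e^u\le 1+u+\frac{u^2/2}{1-u/3}$ for $u\le M s<3$), deduce that $\exp\bigl(sY_k-g(s)W_k\bigr)$ is a supermartingale, apply Markov's inequality on the event $\{Y_n-Y_0\ge\lambda,\ W_n\le\sigma^2\}$, optimize over $s$, and double the bound for the two-sided statement. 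None of this appears in your write-up.

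Even read charitably as an attempt at Theorem~\ref{thm:discrepancy}, the step where you ``break the circle'' fails quantitatively. You bound the cumulative variance of the auxiliary martingale $X_T$ (the paper's linear energy $L_S$) deterministically by $24\,at\log n$, so Freedman gives $\Pr[|X_T|\ge \mu]\le 2\exp(-\Omega(\mu^2/(at\log n)))$; since you need $\mu=O(at)$ to conclude $W_T=O(at)$, the failure probability is only $\exp(-\Omega(at/\log n))$. This cannot be ``absorbed into the constant $100$'': for $\lambda$ near its maximal relevant value $2a\sqrt t$ the target probability is $\exp(-at/25)$, which is far smaller than $\exp(-O(at/\log n))$ once $\log n$ is large, and for the application $\lambda=\Theta(\sqrt{\log n})$ your bound is also too weak when $t$ is of order $\log n$. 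The paper avoids exactly this loss by a dyadic peeling over the level of $Q'_S(T)$, bounding $\Pr\bigl[L_S(T)\le -7\cdot 2^j at,\ Q'_S(T)\le 2^{j+5}at\bigr]$ for each $j$ using that the predictable variance of $L_S$ is at most $8Q'_S(k)$ --- on each slab the variance is comparable to the deviation demanded, so no $\log n$ factor is incurred.
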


Observe crucially that the above inequality is much more powerful than the related Azuma-Hoeffding or
Bernstein's inequality. In particular, the term $W_n$ is the variance encountered by the martingale on the particular sample path it took, as opposed to a worst case bound on the variance over all possible paths.

\paragraph*{Simple Observations}
We now get back to the proof of Theorem \ref{thm:discrepancy} and begin with a few simple observations.

Fix a set $S\in \mathcal{S}$. 
Let the vector solution returned by the SDP at time $k$ be given by vectors $u_i^k$ for $i\in [n]$ where we take $u_i^k=0$ if $i\not\in A_k$. 
We say that 
$S$ becomes {\em active} at time $k$ if $k$ is the first time step when $|S\cap A_k|\le at$.

\begin{obs}
\label{obs:onlysmall}
Before a set $S\in\mathcal{S}$ is active, it incurs zero discrepancy.
\end{obs}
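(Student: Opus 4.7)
The plan is a direct induction on time $k$, using the monotonicity of the alive set. First I would note that $A_{k+1} \subseteq A_k$ always holds, since Step~\ref{apx:rnd} only removes elements from the alive set. Consequently, $|S \cap A_k|$ is non-increasing in $k$, so if $S$ has not become active by time $k$, then $S$ was big (i.e.~$|S \cap A_{k'}| > at$) at \emph{every} earlier time step $k' \le k$.

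Next I would invoke the big-set SDP constraint \eqref{sdp1}, which forces $\sum_{i \in S \cap A_{k'}} u_i^{k'} = 0$ as a vector equality (since the squared norm is zero) for every $k'$ at which $S$ is big. Combining this with the update rule in Step~\ref{apx:round}, the change in the signed discrepancy at step $k'$ is
\[
D_S(k') - D_S(k'-1) \;=\; \sum_{i \in S \cap A_{k'}} \gamma \langle r_{k'}, u_i^{k'} \rangle \;=\; \gamma \Big\langle r_{k'},\, \sum_{i \in S \cap A_{k'}} u_i^{k'} \Big\rangle \;=\; 0,
\]
where the frozen coordinates $i \in S \setminus A_{k'}$ contribute zero because $x_{k'}(i) = x_{k'-1}(i)$ by Step~\ref{apx:round}.

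Finally, since $D_S(0) = 0$ (the algorithm initializes $x_0 = 0^n$), summing the telescoping differences over all $k' \le k$ yields $D_S(k) = 0$, completing the induction. There is no real obstacle here: the statement is essentially a bookkeeping consequence of constraint \eqref{sdp1} together with the monotonicity of $A_k$, and the only point that warrants a line of care is making sure frozen coordinates do not secretly contribute to the discrepancy update, which is immediate from the way Step~\ref{apx:round} leaves them untouched.
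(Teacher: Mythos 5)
Your proof is correct and uses the same argument as the paper: constraint \eqref{sdp1} forces $\sum_{i \in S\cap A_k} u_i^k = 0$ while $S$ is big, so each discrepancy increment vanishes and the total telescopes to zero. The added remarks on the monotonicity of $A_k$ and the frozen coordinates are fine bookkeeping but do not change the substance.
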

\begin{proof}
Suppose $S$ becomes active at time $k_S$. Then, $D_S(k_S-1)=\sum_{k=1}^{k_S-1}\gamma\langle r_k, \sum_{i\in S\cap A_k}u_i^k\rangle=0$, since by  SDP constraint (\ref{sdp1}), $\sum_{i\in S\cap A_k}u_i^k=0$ for $k<k_S$.
\end{proof}

\begin{obs}
\label{obs:smallt}
As a set has no more than $at$ alive variables when it becomes active, Observation \ref{obs:onlysmall} implies that the maximum discrepancy any set can have is $2at$, which gives Theorem \ref{thm:discrepancy} for $\lambda > 2at^{1/2}$. 
So henceforth we can assume that $\lambda \leq 2a t^{1/2}$.
\end{obs}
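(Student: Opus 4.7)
The plan is to establish a purely deterministic bound $|D_S(T)| \leq 2at + O(1)$ on the discrepancy of any set $S$, which renders the probabilistic claim of Theorem \ref{thm:discrepancy} trivial in the stated regime $\lambda > 2at^{1/2}$: the event $|D_S(T)| \geq \lambda \sqrt{t}$ simply cannot occur, so its probability is $0 \leq 8\exp(-\lambda^2/(100a))$.

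First, I would handle the easy case: if $S$ never becomes active by time $T$, then by Observation \ref{obs:onlysmall} $D_S(k) = 0$ for all $k \leq T$ prior to the final rounding step, and the earlier observation about step \ref{stp3} shows the rounding changes the discrepancy by at most $1$. So henceforth assume $S$ becomes active at some time $k_S \leq T$.

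Next, split $S$ into $F = S \setminus A_{k_S}$, the elements already frozen by time $k_S - 1$, and $L = S \cap A_{k_S}$, those still alive when $S$ becomes active. By the definition of becoming active, $|L| \leq at$. Observation \ref{obs:onlysmall} gives $D_S(k_S - 1) = 0$, i.e.
\[
\sum_{i \in F} x_{k_S - 1}(i) \;=\; -\sum_{i \in L} x_{k_S - 1}(i).
\]
The elements of $F$ do not have their SDP updates applied after they are frozen, so $x_T(i) = x_{k_S-1}(i)$ for $i \in F$ up to the final step \ref{stp3} rounding, whose total effect on $D_S$ is at most $|F|/n \leq 1$. Therefore, modulo this additive $1$,
\[
D_S(T) \;=\; \sum_{i \in F} x_T(i) + \sum_{i \in L} x_T(i) \;=\; \sum_{i \in L}\bigl(x_T(i) - x_{k_S-1}(i)\bigr).
\]
Since every $x_k(i)$ lies in $[-1,1]$, each summand has magnitude at most $2$, yielding $|D_S(T)| \leq 2|L| + 1 \leq 2at + 1$.

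There is really no obstacle here beyond bookkeeping: the whole argument is deterministic, and the only mildly delicate point is being careful that ``frozen'' elements of $S$ essentially cancel against the alive portion at time $k_S - 1$ thanks to the zero-discrepancy guarantee of Observation \ref{obs:onlysmall}. Once $|D_S(T)| \leq 2at + 1$ is established, the regime $\lambda > 2at^{1/2}$ is trivial and we may indeed restrict attention to $\lambda \leq 2at^{1/2}$ in the remainder of the proof of Theorem \ref{thm:discrepancy}.
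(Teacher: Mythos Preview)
Your proof is correct and follows essentially the same reasoning the paper leaves implicit in the observation: split $S$ into the frozen part $F$ and the alive part $L$ at the moment $S$ becomes active, use $D_S(k_S-1)=0$ to cancel $F$ against $L$, and bound the remaining change on $L$ by $2|L|\le 2at$. One minor point: since $D_S(T)$ in Theorem~\ref{thm:discrepancy} refers to the discrepancy \emph{before} the step~\ref{stp3} rounding (the rounding error is accounted for separately in the final wrap-up of Theorem~\ref{thm:main}), frozen elements satisfy $x_T(i)=x_{k_S-1}(i)$ exactly, so your ``$+1$'' is unnecessary and you get $|D_S(T)|\le 2at$ on the nose.
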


Define the {\em energy} of  set $S$ at end of time step $k$ as $E_S(k)=\sum_{i\in S}x_k(i)^2$ and change in energy of $S$ at time step $k$ as $\Delta_k E_S = E_S(k)-E_S(k-1)$. Then, 
\begin{eqnarray}
\Delta_kE_S &= & \sum_{i\in S}x_k(i)^2-\sum_{i\in S}x_{k-1}(i)^2 = \sum_{i\in S} \left((x_{k-1}(i)+\gamma\langle r_k,u_i^k\rangle)^2-x_{k-1}(i)^2 \right) \nonumber \\
& =& \gamma^2\sum_{i\in S} \langle r_k,u_i^k\rangle^2+2\gamma\langle r_k,\sum_{i\in S}x_{k-1}(i)u_i^k\rangle \label{eq:inject}
\end{eqnarray}
The following is a simple but crucial observation.

\begin{obs}
\label{obs:energy-increase}
Once a set $S\in\mathcal{S}$ becomes active, its energy can increase overall by at most $at$. 
\end{obs}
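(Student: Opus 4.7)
The plan is to decompose the energy change $E_S(k) - E_S(k_S - 1)$ for any $k \ge k_S$ into a sum over elements of $S$, and argue that only the elements alive at time $k_S$ can contribute. By definition of $S$ becoming active at time $k_S$, we have $|S \cap A_{k_S}| \le at$, so bounding each surviving element's contribution by a trivial constant will give the stated bound.

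First, I would split $S = (S \cap A_{k_S}) \sqcup (S \setminus A_{k_S})$. For any element $i \in S \setminus A_{k_S}$, $i$ is already frozen at the beginning of step $k_S$, and by the algorithm's update rule the value $x_j(i)$ does not change for any $j \ge k_S - 1$ (step (c) never re-adds a removed index, and step (b) only updates $i \in A_j$). Hence $x_k(i)^2 - x_{k_S - 1}(i)^2 = 0$ for every such $i$ and every $k \ge k_S$, so these elements contribute nothing to the energy change after activation.

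Next, for each $i \in S \cap A_{k_S}$, I would use the crude bound $x_k(i)^2 - x_{k_S-1}(i)^2 \le x_k(i)^2 \le 1$, which holds since the earlier observation guarantees $x_k(i) \in [-1,1]$ throughout the algorithm. Summing these two contributions yields
\[
 E_S(k) - E_S(k_S - 1) \;=\; \sum_{i \in S \cap A_{k_S}} \bigl(x_k(i)^2 - x_{k_S - 1}(i)^2\bigr) \;\le\; |S \cap A_{k_S}| \;\le\; at,
\]
where the final inequality is exactly the definition of $S$ becoming active at $k_S$.

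I do not expect a real obstacle here; the only subtlety is being careful that ``active'' refers to $|S \cap A_{k_S}| \le at$ and that frozen indices truly remain frozen. The cleaner intuition is that after activation the only ``live'' contributions to the energy come from at most $at$ coordinates, each capped at $1$ in squared value, so the total post-activation energy increase is bounded by $at$ regardless of the particular trajectory the random walk takes.
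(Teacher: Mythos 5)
Your proof is correct and takes essentially the same approach as the paper: you identify that only the at most $at$ elements alive at activation time can ever change again, and bound each of their squared values trivially by $1$. Your write-up merely makes the paper's terse argument more explicit by spelling out the decomposition into frozen versus alive coordinates.
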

\begin{proof}
When $S$ becomes active, it has at most $at$ alive variables. Moreover, a frozen variable is never updated by the algorithm and can never become alive again. As the energy of a single variable is bounded by $1$, the energy of $S$ can increase by at most $at$ after it becomes active.
\end{proof}

{\em Remark:} Note that the energy of a set $S$ does not necessarily increase monotonically over time. It evolves randomly and can also decrease. So, even though the overall increase is at most $at$, the total energy ``injected"
$\sum_{k\geq k_S} |\Delta_k E_S|$ can be arbitrarily larger than $at$. Here $k_S$ denotes the time when $S$ becomes active.

\medskip

By Observation~\ref{obs:onlysmall}, we only need to bound the discrepancy of $S$ after it becomes active. For notational convenience, let us call the time $S$ becomes active as time $0$. So, $D_S(k)$
and $E_S(k)$ will be the signed discrepancy and energy of $S$ respectively, $k$ time steps after it becomes active.

\begin{obs}      
\label{obs:var}
After $S$ becomes active, $D_S(k)$ behaves like a martingale with variance of increment at time step $k$  bounded by
\[ \E_{k-1}[ (D_S(k)-D_S(k-1))^2]  \leq  2 \gamma^2 \sum_{i\in S\cap A_k} \|u_i^k \|_2^2 \]
\end{obs}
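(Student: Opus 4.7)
The plan is to expand the increment $D_S(k)-D_S(k-1)$ using the update rule from step \ref{apx:round} of the algorithm, exploit that conditioned on $r_1,\dots,r_{k-1}$ the SDP solution $u_i^k$ is deterministic (it is computed from $x_{k-1}$ and $A_k$, which are functions of the past randomness), and finally invoke SDP constraint \eqref{sdp3} which applies precisely because $S$ is small after it becomes active.

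Concretely, first I would write
\[
D_S(k)-D_S(k-1) \;=\; \sum_{i\in S\cap A_k}\bigl(x_k(i)-x_{k-1}(i)\bigr) \;=\; \gamma\,\bigl\langle r_k,\ w_k\bigr\rangle, \qquad w_k \;:=\;\sum_{i\in S\cap A_k} u_i^k,
\]
where $w_k$ depends only on $r_1,\dots,r_{k-1}$. The martingale property is immediate: entries of $r_k$ are independent mean-zero $\pm 1$ random variables, so $\E_{k-1}[\langle r_k,w_k\rangle]=0$, giving $\E_{k-1}[D_S(k)]=D_S(k-1)$, and $|D_S(k)|\leq n$ so integrability is trivial.

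Next, to bound the conditional variance I apply Lemma~\ref{lem:tri} to the (conditionally deterministic) vector $w_k$:
\[
\E_{k-1}\bigl[(D_S(k)-D_S(k-1))^2\bigr] \;=\; \gamma^2\,\E_{k-1}\bigl[\langle r_k,w_k\rangle^2\bigr] \;=\; \gamma^2\,\|w_k\|_2^2 \;=\; \gamma^2\,\Bigl\|\sum_{i\in S\cap A_k}u_i^k\Bigr\|_2^2.
\]
Finally, since we are considering $S$ only after it has become active, by definition $|S\cap A_k|\leq at$, so $S\in\mathcal{L}_k$ is small at every such time step $k$. Thus SDP constraint \eqref{sdp3} applies to $S$ at time $k$, giving
\[
\Bigl\|\sum_{i\in S\cap A_k} u_i^k\Bigr\|_2^2 \;\leq\; 2\sum_{i\in S\cap A_k}\|u_i^k\|_2^2,
\]
and combining the two displays yields the claimed bound.

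There is no real obstacle here; the only subtlety worth flagging is the conditioning argument — one must be careful that $u_i^k$ is a function of the past randomness, so that Lemma~\ref{lem:tri} (stated for a fixed vector against a random sign vector) applies inside the conditional expectation. The other point to highlight is that constraint \eqref{sdp3} is invoked only for small sets, which is legitimate since ``active'' in the sense of Observation~\ref{obs:onlysmall} is exactly the property that places $S$ in $\mathcal{L}_k$ from time $k_S$ onward (once a variable is frozen it is never re-added to $A_k$, so $|S\cap A_k|$ is non-increasing in $k$).
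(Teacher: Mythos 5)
Your proof is correct and follows essentially the same route as the paper: write the increment as $\gamma\langle r_k, \sum_{i\in S\cap A_k}u_i^k\rangle$, use Lemma~\ref{lem:tri} (with the SDP vectors deterministic given past randomness) to get the conditional variance $\gamma^2\|\sum_{i\in S\cap A_k}u_i^k\|_2^2$, and then bound it via constraint~\eqref{sdp3} since $S$ stays small once active. The extra remarks about conditioning and the monotonicity of $|S\cap A_k|$ are accurate but not needed beyond what the paper states.
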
 
\begin{proof}
The discrepancy update of $S$  at time $k$ is 
$\gamma\langle r_k,\sum_{i\in S\cap A_k}u_i^k\rangle$. This has expectation $0$ (averaging of $r_k$) and by Lemma \ref{lem:tri} this  variance is exactly $\gamma^2 \| \sum_{i\in S\cap A_k} u_i^k \|_2^2$, which 
by SDP constraint (\ref{sdp3}) is upper bounded by $  2 \gamma^2 \sum_{i\in S\cap A_k} \|u_i^k \|_2^2$.
\end{proof}

\medskip

\paragraph*{ Proof of Theorem \ref{thm:discrepancy}}
The plan of the proof is the following. Freedman's inequality allows us to bound the discrepancy at time $T$ as a function of the variance  
$\sum_{k=1}^T \E_{k-1}[ (D_S(k)-D_S(k-1))^2]$ which is at most 
 $2 \gamma^2 \sum_{k=1}^T \sum_{i\in S\cap A_k} \|u_i^k \|_2^2$ by Observation \ref{obs:var}.
As we will see, this term is
expected total energy injected into $S$. 

As the overall energy increase of $S$ can be at most $at$ (Observation \ref{obs:energy-increase}), it would suffice to show that the total injected energy into $S$ is comparable to $at$.
To do this, we will use the approximate orthogonality constraints \eqref{sdp4} and apply Freedman's inequality again to show that the injected energy is tightly concentrated around the energy increase. 
We now give the details. 

\medskip

Recall that by \eqref{eq:inject}, the energy change at time $k$ is a random variable given by 
\[
\Delta_kE_S = \gamma^2\sum_{i\in S} \langle r_k,u_i^k\rangle^2+2\gamma\langle r_k,\sum_{i\in S}x_{k-1}(i)u_i^k\rangle \]
Denote the first term above as \[\Delta_kQ_S= \gamma^2\sum_{i\in S} \langle r_k,u_i^k\rangle^2\] which we will call the change in quadratic energy of $S$ at time step $k$ and let $Q_S(k)=\sum_{j=1}^k\Delta_jQ_S$, the total quadratic energy of $S$ till time $k$. 

 Similarly, denote the second term as 
\[ \Delta_kL_S=2\gamma\langle r_k,\sum_{i\in S}x_{k-1}(i)u_i^k\rangle\]
which we will call the change in linear energy of $S$ at time step $k$, and let $L_S(k)=\sum_{j=1}^k\Delta_jL_S$, the total linear energy of $S$ till time $k$. 
The energy of $S$ at time $k$ is given by $E_S(k)=Q_S(k)+L_S(k)$. 

Define $Q'_S(k)$ as
\[Q'_S(k) =\sum_{j=1}^k\E_{j-1}[\Delta_jE_S]=\sum_{j=1}^k\E_{j-1}[\Delta_jQ_S].\]
 By lemma~\ref{lem:tri},
\[
Q'_S(k)
=\sum_{j=1}^k\gamma^2\sum_{i\in S}\|u_i^j\|_2^2
\]

We are now ready to prove the tail bound on discrepancy.
The probability that discrepancy of $S$ at time $T$ exceeds $\lambda\sqrt{t}$ can be written as
\begin{align}
\Pr\left[|D_S(T)| \ge \lambda\sqrt{t}\right] \leq    \Pr\left[|D_S(T)| \ge \lambda\sqrt{t}, Q'_S(T)\le 16at \right] + \Pr \left[Q'_S(T) > 16 at \right]\label{eq:prob}
\end{align}
We now bound each of the terms in \eqref{eq:prob} separately.

\medskip
\noindent
{\bf Bounding the first term.} Recall that $D_S(k)$ is a martingale. To apply Freedman's inequality(Theorem~\ref{thm:freedman}) we bound $M$ and $W_k$ as follows. By Lemma~\ref{lem:tri},
\begin{align*}
M&\le|D_S(k)-D_S(k-1)| =|\gamma\langle r_k,\sum_{i\in S}u_i^k\rangle| \le \gamma\sqrt{n}\|\sum_{i\in S}u_i^k\|_2 \le \gamma n^{3/2}
\end{align*}
Similarly from Lemma~\ref{lem:tri} and the SDP constraint (\ref{sdp3}),
\begin{eqnarray*}
W_k & =&\sum_{j=1}^k\E_{j-1}[(D_S(j)-D_S(j-1))^2]  
=\sum_{j=1}^k \E_{j-1}[\gamma^2\langle r_j,\sum_{i\in S}u_i^j\rangle^2] \\
\label{seqn1} &=& \sum_{j=1}^k \gamma^2 \| \sum_{i\in S}u_i^j\|_2^2 
\le  \sum_{j=1}^k 2\gamma^2 \sum_{i\in S} \|u_i^j\|_2^2 = 2Q'_S(k)
\end{eqnarray*}
Freedman's inequality now gives,
\begin{eqnarray}
\Pr \left[|D_S(T)| \ge \lambda\sqrt{t} \textrm{ and } Q'_S(T)\le 16at \right] 
& \le&  \Pr \left[|D_S(T)| \ge \lambda\sqrt{t} \textrm{ and } W_T\le 32at \right] \nonumber \\
&\le & 2\exp\left(\frac{-\lambda^2t}{2[32at+\gamma n^{3/2}\lambda\sqrt{t}/3]}\right) \nonumber \\
\label{freeduse} & \le & 2\exp\left(\frac{-\lambda^2}{100a}\right)   \qquad \textrm{(using $\lambda\le 2a\sqrt{t}$)}
\end{eqnarray}

\medskip
\noindent
{\bf Bounding the second term.} We can write 
\begin{align}
&  \Pr\left[Q'_S(T) > 16 at \right]  
=   \sum_{j=0}^\infty \Pr \left[ 2^{j+4}at<Q'_S(T)\le 2^{j+5}at \right]  \nonumber\\
&\le  \Pr\left[Q_S(T)\le Q'_S(T)-8at \right] + \sum_{j=0}^\infty \Pr \left[Q'_S(T)\le 2^{j+5}at, Q_S(T)\ge 2^{j+3}at\right]
\label{line2}
\end{align}
The inequality above holds as the event $\{Q'_S(T) > 16 at\}$ is contained in the union of the two events in \eqref{line2}.

As the energy $E_S(T)$ of $S$ cannot exceed $at$, we have $E_S(T) = L_S(T)+Q_S(T)\le at$. Thus,
 $Q_S(T)\ge 2^{j+3}at$ implies $L_S(T) \le at -2^{j+3}at \le -7 \cdot2^j at$, giving
\begin{align}
\Pr[Q'_S(T) > 16 at]  \leq  \Pr[Q_S(T)\le Q'_S(T)-8at] + \sum_{j=0}^\infty  \Pr[L_S(T)\le -7\cdot2^jat, Q'_S(T)\le 2^{j+5}at]
\label{eqn:probab}
\end{align}

To bound the second term on the right hand side of (\ref{eqn:probab}), we will crucially use the approximate orthogonality constraints in the SDP (\ref{sdp4}) and use Freedman's inequality.
To this end, note that $L_S(k)$ is a martingale whose difference sequence can be bounded by 
\begin{align*}
M\le|L_S(k)-L_S(k-1)| =|2\gamma\langle r_k,\sum_{i\in S}x_{k-1}(i)u_i^k\rangle| \le 2\gamma\sqrt{n}\|\sum_{i\in S}x_{k-1}(i)u_i^k\|_2 \le 2\gamma n^{3/2}
\end{align*}
where we used Lemma~\ref{lem:tri} in the first inequality and the fact that $|x_{k-1}(i)|\le 1$. 

By Lemma~\ref{lem:tri} and SDP constraint (\ref{sdp4}),
\begin{eqnarray}
W_k & =&\sum_{j=1}^k\E_{j-1}[|L_S(j)-L_S(j-1)|^2] =\sum_{j=1}^k \E_{j-1}[4\gamma^2\langle r_j,\sum_{i\in S}x_{j-1}(i)u_i^j\rangle^2] \nonumber \\
&=& \sum_{j=1}^k 4\gamma^2 \| \sum_{i\in S}x_{j-1}(i)u_i^j\|_2^2 \le  \sum_{j=1}^k 8\gamma^2 \sum_{i\in S} \|u_i^j\|_2^2 = 8Q'_S(k)  \nonumber
\end{eqnarray}
Applying Freedman's inequality now with these bound on $M$ and $W_k$, we obtain
\begin{align}
\Pr\left[|L_S(T)| \ge 7\cdot2^jat \textrm{ and } Q'_S(T)\le 2^{j+5}at\right]
& \le  \Pr\left[|L_S(T)| \ge 7\cdot2^jat\textrm{ and } W_T\le 2^{j+8}at\right] \nonumber \\
&\le 2\exp\left(\frac{-49\cdot2^{2j}a^2t^2}{2[2^{j+8}at+2\gamma n^{3/2} \cdot 7\cdot2^jat/3]}\right) \nonumber \\
& \le 2\exp\left(\frac{-2^{j}at}{20}\right)  \nonumber
\end{align}

Together with $\lambda \leq 2a\sqrt{t}$ (by our assumption), this gives  
\begin{align}
\sum_{j=0}^\infty  \Pr[L_S(T)\le -7\cdot2^jat, Q'_S(T) 
\le 2^{j+5}at] \leq 4 \exp\left(\frac{-at}{20}\right) \leq 4 \exp(\frac{-\lambda^2}{100a}) \label{freeduse2}
\end{align}

It remains to bound $\Pr[Q_S(T)\le Q'_S(T) -8at]$, the first term in \eqref{eqn:probab}. 
We use Freedman's inequality in a simple way (even Azuma-Hoeffding would suffice here). 

Define the martingale $Z_k=Q_S(k)-\sum_{j=1}^k\E_{j-1}[\Delta_jQ_S]=Q_S(k)-Q'_S(k)$ (this is the standard Doob decomposition of $\Delta_kQ_S$). By Lemma~\ref{lem:tri},
\begin{align*}
M\le|Z_k-Z_{k-1}| =|\Delta_kQ_S-\E_{k-1}[\Delta_kQ_S]| 
\le 2|\Delta_kQ_S| \le 2\gamma^2n\sum_{i\in S} \|u_i^k\|_2^2 \le 2\gamma^2n^2
\end{align*}
Using the trivial bound $\E_{j-1}[(Z_j-Z_{j-1})^2]\le M^2$, we obtain that 
\begin{align*}
W_T &=\sum_{j=1}^T \E_{j-1}[(Z_k-Z_{k-1})^2] \le 4T\gamma^4n^4 = 48 \gamma^2 n^4 \log n.
\end{align*}
As $Q_S(T)\le Q'_S(T)-8at $ is the same as $Z_T\le -8at$, by Freedman's inequality we get
\begin{align}
 \Pr[Q_S(T)\le Q'_S(T)-8at]  &\le    \Pr[|Z_T|\ge 8at] \nonumber   \\
& \le    2\exp\left(\frac{-64a^2t^2}{2[W_T+16\gamma^2n^2at/3]}\right)  \nonumber \\
& \le    2\exp\left(\frac{-a^2t^2}{2\gamma^2 n^4 \log n}\right) \nonumber\\
& \leq 2\exp(-a^2t^2) \leq 2\exp(\frac{-\lambda^2}{100a})
\label{finally} 
\end{align}
In the last step we use that $\gamma = 1/(n^2 \log n)$, and Observation \ref{obs:smallt}.

Combining equations (\ref{eq:prob}),(\ref{freeduse}),(\ref{eqn:probab}),(\ref{freeduse2}) and (\ref{finally}), we obtain the desired bound
\begin{eqnarray*}
 \Pr[|D_S(T)| \ge \lambda\sqrt{t}]  & \le & 8\exp\left(\frac{-\lambda^2}{100a}\right)
\end{eqnarray*}

\subsection{Termination and finishing the proof}
\label{s:terminate}
To finish the proof, we show that the last rounding step at time $T+1$ does not cause problems.

\begin{thm}
\label{thm:terminate}
After time $T$, there are no alive variables left with probability at least $1-O(n^{-2})$.
\end{thm}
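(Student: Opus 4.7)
The plan is to introduce a potential capturing the total ``slack'' of the currently alive variables and show it shrinks multiplicatively in expectation at every step, then conclude by Markov's inequality.

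Specifically, I would define $\Phi_k := \sum_{i \in A_k} \bigl(1 - x_{k-1}(i)^2\bigr)$ and record two immediate bounds: since each summand lies in $[0,1]$, we have $\Phi_k \le |A_k|$; and since every alive index satisfies $|x_{k-1}(i)| < 1-1/n$, each summand exceeds $1/n$, so $\Phi_k \ge |A_k|/n$. The first bound, rearranged as $|A_k| \ge \Phi_k$, will supply the multiplicative drift; the second will let a small value of $\Phi_{T+1}$ force $|A_{T+1}| = 0$.

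The substantive step is the one-step drift calculation. Since $A_{k+1}\subseteq A_k$ and all summands are non-negative, $\Phi_{k+1} \le \sum_{i\in A_k}(1-x_k(i)^2)$. Conditioning on the history through time $k-1$, under which the SDP solution $u_i^k$ is determined, Lemma~\ref{lem:tri} gives $\E_{k-1}[x_k(i)^2] = x_{k-1}(i)^2 + \gamma^2\|u_i^k\|_2^2$ for each $i\in A_k$. Combining with the SDP lower bound $\sum_{i\in A_k}\|u_i^k\|_2^2 \ge |A_k|/3$ from Theorem~\ref{largedual}, and then with $|A_k|\ge \Phi_k$, yields
\[
\E_{k-1}[\Phi_{k+1}] \;\le\; \Phi_k - \tfrac{\gamma^2}{3}|A_k| \;\le\; \Phi_k\bigl(1 - \gamma^2/3\bigr).
\]

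Iterating this from $\Phi_1 = n$ gives $\E[\Phi_{T+1}] \le n\exp(-T\gamma^2/3) = n\cdot n^{-4} = n^{-3}$, since $T\gamma^2/3 = 4\log n$ by the choice of $T$. Markov's inequality then yields $\Pr[\Phi_{T+1}\ge 1/n] \le n^{-2}$, and whenever $\Phi_{T+1} < 1/n$ holds, the per-term lower bound of more than $1/n$ forces $|A_{T+1}| = 0$. I do not anticipate a serious obstacle: the only real content is noticing that the natural convex potential $\sum_i(1-x_i^2)$ enjoys multiplicative drift under the $\Omega(|A_k|)$ SDP feasibility bound, after which only Markov is needed and no additional concentration argument is required.
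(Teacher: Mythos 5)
Your proof is correct and follows essentially the same route as the paper's: you track the potential $\sum_{i\in A_k}(1-x_{k-1}(i)^2)$, use Lemma~\ref{lem:tri} and the SDP value bound $\sum_{i\in A_k}\|u_i^k\|_2^2\ge|A_k|/3$ from Theorem~\ref{largedual} to get a $(1-\gamma^2/3)$ multiplicative decay, and finish with Markov plus the observation that each alive variable contributes more than $1/n$. The only cosmetic difference is that you index the potential with $x_{k-1}$ rather than $x_k$, which avoids a harmless off-by-one in the paper's own presentation but changes nothing substantive.
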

\begin{proof}
Given the coloring $x_k$ at time $k$, define $G_k = \sum_{i \in A_k} (1-x_k(i)^2)$. Clearly $G_1\leq n$. 
As $x_{k}(i) =   x_{k-1}(i) + \gamma \langle r_k,u_i^k \rangle $, we have that $\E_{k-1}[x_k(i)^2] = x_{k-1}(i)^2 + \gamma^2 \|u_i^k\|_2^2$.
It follows
\begin{eqnarray*} 
\E_{k-1}[G(k)]  & = &   \E_{k-1}\left[\sum_{i \in A_k} (1-x_{k}(i)^2)\right] 
                        =  \sum_{i \in A_k} \left(1-x_{k-1}(i)^2 \right) - \gamma^2 \sum_{i \in A_k} \|u_i^k\|_2^2   \\
                  & \leq &  \sum_{i \in A_k} \left(1-x_{k-1}(i)^2 \right) - \gamma^2  |A_k|/3  
                   \leq  (1 - \gamma^2/3)  \sum_{i \in A_k} (1-x_{k-1}(i)^2)  \\ 
                  & \leq &  (1 - \gamma^2/3)  \sum_{i \in A_{k-1}} (1-x_{k-1}(i)^2) 
                  =  (1 - \gamma^2/3)G_{k-1}  
\end{eqnarray*}
Thus by induction, 
\begin{align*}
\E[G_{T+1}] &\leq (1-\gamma^2/3)^T G_1 \leq e^{-\gamma^2 T/3 } n 
= n^{-4} \cdot n = 1/n^3.
\end{align*}
Thus by Markov's inequality, $\Pr[G_{T+1} \geq 1/n] \leq 1/n^2$. However, $G_{T+1} \leq 1/n$ implies that $A_{T+1}=0$ as each alive variable contributes at least $1-(1-1/n)^2 > 1/n$ to $G_{T+1}$.
\end{proof}

Theorem \ref{thm:main} now follows directly. 
Applying Theorem~\ref{thm:discrepancy} with $\lambda=  c \log^{1/2} n$ for $c$ a large enough constant and taking a union bound over the at most $nt \leq n^2$ sets, 
we get that $|D_S(T)|=O((t\log n)^{1/2})$ with probability at least $1-1/\textrm{poly}(n)$ for all sets $S$. 
By Theorem~\ref{thm:terminate} with probability at least $1-O(n^{-2})$, all variables are frozen by time $T$ and hence at most an additional discrepancy of $1$ is added by rounding the frozen variables to $\pm 1$.

%

{\hfill$\qed$}

\section{Extension to the Koml\'{o}s setting}
\label{s:komlos}


The algorithm also extends to the more general Koml\'{o}s setting with some additional modifications.
Recall that in the Koml\'{o}s setting, we are given an $m\times n$ matrix $B$ with arbitrary real entries $b_{ji}$ such that 
for each column $i$, it holds that $\sum_j b_{ji}^2 \leq 1$. Let $r_j$ denote the $j$-th row of $B$ and let $a$ be the constant as in the previous section.
We will show the following result.

\begin{thm}
\label{thm:discrepancykomlos}
Fix any row $r_j$ of matrix $B$.
Then, for any $\lambda \geq 0 $, the discrepancy of $r_j$ at time step $T$ (the end of the algorithm) satisfies 
\[ \Pr \left[ |D_T(r_j)| \geq \lambda \right] \leq 8 \exp(-\lambda^2/(1000a))\]
where $|D_T(r_j)|$ is the discrepancy of row $j$ after time step $T$.
\end{thm}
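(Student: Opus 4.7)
The plan is to mirror the proof of Theorem~\ref{thm:discrepancy} essentially verbatim, with the natural replacement of set-indicator quantities by the row-weighted quantities $b_{ji}$, and the replacement of ``$at$'' by ``$a$'' throughout.  The algorithm of Section~\ref{s:bf} is modified as follows: call row $j$ \emph{big} at time $k$ if $\sum_{i \in A_k} b_{ji}^2 \ge a$ and \emph{small} otherwise.  The big-row constraint becomes $\|\sum_{i \in A_k} b_{ji} u_i\|_2^2 = 0$, the proportional discrepancy constraint becomes $\|\sum_{i \in A_k} b_{ji} u_i\|_2^2 \le 2 \sum_{i \in A_k} b_{ji}^2 \|u_i\|_2^2$, and the approximate orthogonality constraint becomes $\|\sum_{i \in A_k} b_{ji} x_{k-1}(i) u_i\|_2^2 \le 2 \sum_{i \in A_k} b_{ji}^2 \|u_i\|_2^2$.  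Since every column has $\ell_2$-norm at most $1$, summing $\sum_{i\in A_k} b_{ji}^2 \ge a$ over big rows gives at most $|A_k|/a$ big rows, exactly as in the Beck--Fiala case.  The SDP feasibility argument of Section~\ref{s:sdpfeasibility} then applies without change: Theorem~\ref{bigsubspace} is already stated for arbitrary vectors with diagonal weights $\langle v,e_i\rangle^2$, so feeding in the vectors $v=(b_{ji})_i$ and $(b_{ji}x_{k-1}(i))_i$ together with multipliers $\beta_{r_j},\beta_{r_j}^x$ yields the same subspace of dimension at least $|A_k|/2$, and intersecting with the orthogonal complement of the big-row span gives a subspace of dimension $\ge |A_k|/3$.

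For the tail bound on a fixed row $r_j$, define $D_j(k) = \sum_i b_{ji} x_k(i)$ and the energy $E_j(k) = \sum_i b_{ji}^2 x_k(i)^2$.  The key observation is the Komlós analogue of Observation~\ref{obs:energy-increase}: once $r_j$ becomes active at some time $k_j$, frozen coordinates never move, so the total increase in $E_j$ from time $k_j$ onward is bounded by $\sum_{i \in A_{k_j}} b_{ji}^2 \le a$.  Decompose $\Delta_k E_j$ into its quadratic part $\Delta_k Q_j = \gamma^2 \sum_i b_{ji}^2 \langle r_k, u_i^k\rangle^2$ and linear part $\Delta_k L_j = 2\gamma \langle r_k, \sum_i b_{ji}^2 x_{k-1}(i) u_i^k\rangle$ exactly as in \eqref{eq:inject}, and set $Q'_j(k) = \sum_{l=1}^k \gamma^2 \sum_i b_{ji}^2 \|u_i^l\|_2^2$.

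The Freedman argument then runs in parallel to Section~\ref{s:disc}.  The proportional discrepancy constraint gives variance $W_T \le 2 Q'_j(T)$ for the martingale $D_j(k)$, and the per-step bound is $M \le \gamma n^{3/2}$ (using Cauchy--Schwarz and $\sum_i b_{ji}^2 \le n$), so on the event $\{Q'_j(T) \le 16a\}$ Freedman gives $\Pr[|D_j(T)| \ge \lambda] \le 2\exp(-\lambda^2/O(a))$ provided $\lambda$ is at most polynomial in $n$ (trivially satisfied since otherwise the discrepancy bound is vacuous).  To bound $\Pr[Q'_j(T) > 16a]$, I would split as in \eqref{line2}: the orthogonality constraint yields variance $W_T \le 8 Q'_j(T)$ for the linear martingale $L_j(k)$, and since $E_j(T) \le a$ the event $\{Q_j(T) \ge 2^{j+3} a\}$ forces $L_j(T) \le -7\cdot 2^j a$; Freedman then gives a $2\exp(-2^j a/O(1))$ bound which sums geometrically.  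A separate application of Freedman (or Azuma) to the Doob martingale $Q_j(k) - Q'_j(k)$, using $|\Delta_k Q_j| \le \gamma^2 n^2$, controls $\Pr[Q_j(T) \le Q'_j(T) - 8a]$ exactly as in \eqref{finally}.  Combining these bounds yields the claimed $8\exp(-\lambda^2/(1000a))$, the factor $1000$ absorbing the looser constants caused by the weights.

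The main obstacle is bookkeeping rather than substance: checking that the weighted quantity $\sum_i b_{ji}^2$ behaves exactly like the set cardinality $|S|$ in every step, in particular that the energy cap of $a$ once active (which underlies the whole argument) truly holds despite the lack of any sparsity or bounded-entry assumption on $B$.  Once this is verified, the three Freedman applications plug in with essentially identical constants as in Section~\ref{s:disc}, and the termination argument of Section~\ref{s:terminate} is unaffected since it depends only on the SDP value lower bound $|A_k|/3$.
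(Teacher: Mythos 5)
Your proposal misses the essential new idea in the Koml\'{o}s extension and contains an inconsistency that reveals why a naive port of the Beck--Fiala argument fails. You write the approximate orthogonality constraint as $\|\sum_i b_{ji} x_{k-1}(i) u_i\|_2^2 \le 2\sum_i b_{ji}^2 \|u_i\|_2^2$ but then define the linear energy increment as $\Delta_k L_j = 2\gamma\langle r_k, \sum_i b_{ji}^2 x_{k-1}(i) u_i^k\rangle$, which is the correct expansion of $E_j(k) = \sum_i b_{ji}^2 x_k(i)^2$. The variance of this martingale is $4\gamma^2 \|\sum_i b_{ji}^2 x_{k-1}(i) u_i^k\|_2^2$ --- note the $b_{ji}^2$ inside the norm --- which your stated constraint (with a bare $b_{ji}$ inside the norm) does not control. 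Your claim that ``the orthogonality constraint yields variance $W_T \le 8Q'_j(T)$'' therefore has no support. If you instead use the constraint that Theorem~\ref{bigsubspace} would actually certify as feasible with the vector $v = (b_{ji}^2 x_{k-1}(i))_i$, you get $\|\sum_i b_{ji}^2 x_{k-1}(i) u_i\|_2^2 \le 2\sum_i b_{ji}^4 \|u_i\|_2^2$, and now you need to compare $\sum_i b_{ji}^4 \|u_i\|_2^2$ against $Q'_j \propto \sum_i b_{ji}^2 \|u_i\|_2^2$. With arbitrary real entries this ratio is unbounded, so you cannot conclude the linear martingale has variance $O(Q'_j)$. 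This is precisely the obstruction the paper points out in the opening sentence of Section~\ref{s:komlos}.

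The paper's fix is not bookkeeping but a genuinely new device: $\lambda$\emph{-truncation}. For each target deviation $\lambda$, entries with $|b_{ji}| > 4a/\lambda$ are set aside; by a Cauchy--Schwarz count (Observation~\ref{obs:large}), once row $j$ is active the $\ell_1$-mass of alive large entries is at most $\lambda/4$, so their total discrepancy contribution is at most $\lambda/2$ and can be absorbed. For the remaining truncated row $r_j^\lambda$, the bound $|b_{ji}| \le 4a/\lambda$ gives $b_{ji}^4 \le (16a^2/\lambda^2) b_{ji}^2$, which makes the orthogonality constraint \eqref{sdp4'} actually control the variance of $L(r_j^\lambda)$ in terms of $Q'(r_j^\lambda)$ (inequality~\eqref{eqn:b4b2}). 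The SDP therefore carries a family of constraints, one pair per prefix of the sorted row ($2n$ per row), not the single pair per row you propose. You explicitly call the problem ``bookkeeping rather than substance'' and flag the concern that the energy cap needs to hold ``despite the lack of any sparsity or bounded-entry assumption,'' but you do not resolve it; the energy cap itself is fine --- it is the variance comparison for the linear martingale that breaks, and truncation is the missing idea.
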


The previous argument does not work directly when the entries $b_{ji}$ are arbitrary
as we may not get strong concentration if some entries $b_{ji}$ are too large.
So we consider the following modified algorithm.

\paragraph{Algorithm.}

Given a matrix $B$, for any $\lambda >0$ we denote by $r_j^{\lambda}$ the $\lambda$-truncation of row $j$ containing only the entries $b_{ji}$ that are at most $4a/\lambda$ in absolute value i.e.,~$r_j^{\lambda}$ only contains those entries $i$ of row $j$ for which $|b_{ji}|\le 4a/\lambda$ and is $0$ otherwise.

As previously, let $A_k$ denote the set of alive variables at beginning of time step $k$, and we set $\gamma=1/n^6$ and $T=(12/\gamma^2)\log n$. 
A row $j$ is called {\em big} at time step $k$ if $\sum_{i\in A_k} b_{ji}^2>a$, and {\em small} otherwise.
As the $\ell_2$-norm of columns of $B'$ is at most $1$, there at most $|A_k|/a$ big rows at any time step $k$.

\paragraph{The modified SDP.}
The SDP is modified as follows: Similar to (\ref{sdp1}) we still require the discrepancy of big rows to be zero. That is,
\begin{equation}
\label{sdp1'} \|\sum_{i \in A_k}  b_{ji} u_i\|_2^2  =   \ 0 \qquad  \textrm{for each big row }  j 
\end{equation}

For an active (not big) row $r_j$ at time $k$, we add proportional discrepancy and approximate orthogonality constraints {\em for every $\lambda$-truncation $r_j^{\lambda}$ of $r_j$} i.e.,~for every $\lambda > 0$, we add the proportional discrepancy constraint \eqref{sdp3} (same as before, we just multiply the $u_i$'s by $b_{ji}$'s)
\begin{equation}
\label{sdp3'}
\| \sum_{i \in A_k, |b_{ji}|\le 4a/\lambda} b_{ji} u_i\|_2^2   \leq  2 \sum_{i\in A_k,|b_{ji}|\le 4a/\lambda} b_{ji}^2 \| u_i\|_2^2 
\end{equation}

and the approximate orthogonality constraints \eqref{sdp4}
\begin{equation}
\label{sdp4'}
\| \sum_{i\in A_k,|b_{ji}|\le 4a/\lambda} b_{ji}^2 x_{k-1}(i) u_i\|_2^2   \leq   2 \sum_{i\in A_k,|b_{ji}|\le 4a/\lambda} b_{ji}^4\| u_i\|_2^2. 
\end{equation}

Notice that as stated, for each active row we add two SDP constraints for every value of $\lambda > 0 $. However it suffices to add at most $2n$ constraints in total for each active row: just sort the entries of a row in increasing order of absolute value and add the proportional discrepancy and orthogonality constraints in the SDP for every prefix of this sorted row  (alternately, one could also consider geometrically increasing values of $\lambda$). Thus the SDP has a polynomial number of constraints at any time step.

\paragraph{Analysis.}

First, exactly as before the SDP is feasible and has a solution with value at least $|A_k|/3$. This follows from 
Theorem \ref{bigsubspace}, which shows that there is a subspace $W$ of dimension
at least $|A_k|/2$ where the corresponding operator is negative semidefinite on $W$,
and then applying the argument in Theorem \ref{largedual}. In fact, this would be true even if \eqref{sdp4'} was replaced by the stronger constraint
\[ \| \sum_{i\in A_k,|b_{ji}|\le 4a/\lambda} b_{ji}^2 x_{k-1}(i) u_i\|_2^2   \leq   2 \sum_{i\in A_k,|b_{ji}|\le 4a/\lambda} b_{ji}^4 x_{k-1}(i)^2 \| u_i\|_2^2. \]

Let $D_k(r_j)$ denote the signed discrepancy of row $j$ at the end of time step $k$, 
\[D_k(r_j)=\sum_{i\in[n]}b_{ji}x_k(i) . \] 
We also extend this definition to truncations of rows: 
\[D_k(r_j^\lambda)=\sum_{i\in[n],|b_{ji}|\le 4a/\lambda}b_{ji}x_k(i) . \]

We now show Theorem \ref{thm:discrepancykomlos}.
Fix a row $r_j$ and a $\lambda\ge 0$. Call an entry $b_{ji}$ large if $|b_{ji}|>4a/\lambda$.
We first make the following key observation.
\begin{obs} 
\label{obs:large}
When a row becomes active, the $\ell_1$-norm of the alive variables in that row that are large can be at most $\lambda/4$.
\end{obs}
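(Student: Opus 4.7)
The plan is to extract the desired $\ell_1$ bound from the $\ell_2$ budget that defines activation. By definition, a row $r_j$ becomes active at time $k$ precisely when it first ceases to be big, i.e., when $\sum_{i\in A_k} b_{ji}^2 \leq a$. Every ``large'' entry (one with $|b_{ji}|>4a/\lambda$) is costly in squared norm, so only a limited total $\ell_1$-mass can be spent on such entries before the squared-norm budget $a$ is exhausted; this is morally a Markov/Chebyshev-type observation.

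Concretely, the implementation is a one-line computation: for each alive coordinate $i\in A_k$ with $|b_{ji}|>4a/\lambda$, the pointwise inequality $b_{ji}^2 > (4a/\lambda)\,|b_{ji}|$ holds by construction. Summing this inequality only over the large alive entries, and bounding the resulting sum of squares from above by the full budget $\sum_{i\in A_k} b_{ji}^2 \leq a$, yields
\[
\frac{4a}{\lambda}\sum_{\substack{i\in A_k\\ |b_{ji}|>4a/\lambda}} |b_{ji}| \;<\; \sum_{\substack{i\in A_k\\ |b_{ji}|>4a/\lambda}} b_{ji}^2 \;\leq\; \sum_{i\in A_k} b_{ji}^2 \;\leq\; a,
\]
which on rearrangement gives the claimed $\ell_1$ bound of $\lambda/4$ on the large alive entries.

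The only subtlety worth flagging is notational: one must confirm that ``becomes active'' in the Koml\'{o}s setting is the natural analogue of Observation~\ref{obs:onlysmall}'s definition in the Beck-Fiala setting, namely the first time $k$ at which the $\ell_2^2$-mass of the alive entries of row $j$ drops to at most $a$ (replacing the Beck-Fiala set-size threshold $at$). Once this is in place, no SDP feasibility, martingale, or Freedman-inequality machinery is needed; the statement is purely a consequence of the activation threshold and is an elementary inequality on nonnegative numbers.
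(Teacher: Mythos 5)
Your proof is correct, and it takes a slightly different (and somewhat more elementary) route than the paper. The paper first bounds the \emph{count} of large alive entries: since each contributes more than $(4a/\lambda)^2$ to the squared norm budget $a$, there can be at most $\lambda^2/(16a)$ of them, and then applies Cauchy--Schwarz to convert the count plus the $\ell_2$ budget into the $\ell_1$ bound $\sqrt{a}\cdot\sqrt{\lambda^2/(16a)}=\lambda/4$. You instead use the pointwise Markov-type inequality $b_{ji}^2 > (4a/\lambda)\,|b_{ji}|$ for large entries and sum directly, bypassing the cardinality estimate and Cauchy--Schwarz entirely. Both arguments use only the activation condition $\sum_{i\in A_k} b_{ji}^2 \le a$ and yield the identical bound; yours is a one-step computation, while the paper's two-step route (count, then Cauchy--Schwarz) has the minor side benefit of also exhibiting the cardinality bound $\lambda^2/(16a)$ on the number of large alive entries, which is sometimes a useful intermediate fact though not needed here. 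Your notational caveat about the meaning of ``becomes active'' is well taken and matches the paper: a row is big at time $k$ iff $\sum_{i\in A_k} b_{ji}^2 > a$, and it becomes active at the first $k$ when this fails.
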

\begin{proof}
Each large entry is at least $4a/\lambda$ in absolute value. As a row $r_j$ becomes active when $\sum_{i \in A_k} b_{ji}^2 \leq a$, there can be at most $a/(4a/\lambda)^2 = \lambda^2/16a$ alive variables with $b_{ji}$ large. By Cauchy-Schwarz inequality, 
\[ \left(\sum_{i \in A_k,|b_{ji}|> 4a/\lambda}  |b_{ji}| \right) \leq \left(\sum_{i \in A_k} b_{ji}^2 \right)^{1/2} \left(\frac{\lambda^2}{16a}\right)^{1/2} \leq \lambda/4. \]
\end{proof}
The above observation implies that when a row becomes active, the large entries in it can change discrepancy by at most $\lambda/2$.
Thus to prove Theorem~\ref{thm:discrepancykomlos}, it suffices to show
\[ \Pr \left[ |D_T(r_j^\lambda)| \geq \lambda/2 \right] \leq 8 \exp(-\lambda^2/(1000a)).\]

This follows similarly to the analysis as before, using the proportional discrepancy (\ref{sdp3'}) and approximate orthogonality constraints (\ref{sdp4'}) for $r_j^\lambda$ and noting that (\ref{sdp4'}) implies that
\begin{equation}
\label{eqn:b4b2}
\| \sum_{i\in A_k,|b_{ji}|\le 4a/\lambda} b_{ji}^2 x_{k-1}(i) u_i\|_2^2   \leq  \sum_{i\in A_k,|b_{ji}|\le 4a/\lambda} b_{ji}^4\| u_i\|_2^2  \leq  \frac{32a^2}{\lambda^2} \sum_{i\in A_k,|b_{ji}|\le 4a/\lambda} b_{ji}^2\| u_i\|_2^2 
\end{equation} 
as $|b_{ji}| \le 4a/\lambda$ for all entries in the truncated row $r_j^\lambda$.
Let us define the energy of $\lambda$-truncation of row $j$ at time $k$ as  
\[E_k(r_j^\lambda)=\sum_{i\in[n], |b_{ji}|\le 4a/\lambda} b_{ji}^2 x_k(i)^2.\]
As previously, once the row becomes active, its energy can rise by at most $a$. 

\smallskip

The analysis in Section \ref{s:disc} had two main ideas:
\begin{enumerate}
\item 
First we showed that the expected squared discrepancy of a set $S$ at time $T$ was $O(1)$ times the energy injected into the set $Q'_S(T)$ (using constraints \eqref{sdp3}).
This argument works exactly  as before using constraints \eqref{sdp3'} and we sketch the details below.

\medskip

For ease of notation we will denote the entries of the truncated row $r_j^\lambda$ as $b_{ji}$ where it is understood that we are setting $b_{ji}=0$ if $b_{ji}$ was large in the original matrix. 
The change in energy at time $k$ is a random variable given by 
\[
\Delta_kE(r_j^\lambda) = \gamma^2\sum_{i\in [n]} b_{ji}^2\langle r_k,u_i^k\rangle^2+2\gamma\langle r_k,\sum_{i\in [n]}b_{ji}^2x_{k-1}(i)u_i^k\rangle \]
Denote the first term above as $\Delta_kQ(r_j^\lambda)$, the change in quadratic energy of $r_j^\lambda$ at time step $k$ and let $Q_k(r_j^\lambda)=\sum_{k'=1}^k\Delta_{k'}Q(r_j^\lambda)$, the total quadratic energy of $r_j^\lambda$ till time $k$. 

 Similarly, denote the second term as 
$ \Delta_kL(r_j^\lambda)$, the change in linear energy of $r_j^\lambda$ at time step $k$, and let $L_k(r_j^\lambda)=\sum_{k'=1}^k\Delta_{k'}L(r_j^\lambda)$, the total linear energy of $r_j^\lambda$ till time $k$. 

Define $Q'_k(r_j^\lambda) =\sum_{k'=1}^k\E_{k'-1}[\Delta_{k'}E(r_j^\lambda)]=\sum_{k'=1}^k\E_{k'-1}[\Delta_{k'}Q(r_j^\lambda)]$.  By lemma~\ref{lem:tri},
\[
Q'_k(r_j^\lambda)
=\sum_{k'=1}^k\gamma^2\sum_{i\in [n]}b_{ji}^2\|u_i^{k'}\|_2^2
\]

Just as before, discrepancy $D_k(r_j^\lambda)$ behaves as a martingale with the variance $W_k$ bounded by $2Q'_k(r_j^\lambda)$. 
Freedman's inequality then gives,
\begin{eqnarray}
\label{komloseqn1}
\Pr \left[|D_T(r_j^\lambda)| \ge \lambda/2 \textrm{ and } Q'_T(r_j^\lambda)\le 16a\right] & \le & 2\exp\left(\frac{-\lambda^2}{1000a}\right) 
\end{eqnarray}

Next we showed that $Q'_S(T)$ was essentially the same as $Q_S(T)$ (shown in \eqref{finally}). In fact this difference can be made arbitrarily small by reducing $\gamma$ and the argument works exactly as before here. In particular, we get
\begin{eqnarray}
\label{komloseqn2}
 \Pr[Q_T(r_j^\lambda)\le Q'_T(r_j^\lambda)-8a]  \le 2\exp\left(\frac{-\lambda^2}{1000a}\right)
\end{eqnarray}

\item 
The second part was to show that the linear term does not cause problems. In particular,
the crucial argument was that $Q_S(T)$ cannot be much more than $at$ as (i) the total rise in energy $L_S(T)+ Q_S(T)$ cannot exceed $at$ and (ii)
$L(T)$ was a martingale with squared deviation comparable to $Q_S(T)$  and hence cannot be much larger than $Q_S^{1/2}(T)$. This step used the constraints \eqref{sdp4}.

 This argument also works similarly in our setting here. For a truncated row $r_j^\lambda$, $Q_T(r_j^\lambda)$ cannot be much more than $a$ as (i) the total rise in energy $L_T(r_j^\lambda)+ Q_T(r_j^\lambda)$ cannot exceed $a$ and (ii)
$L_T(r_j^\lambda)$ is a martingale with squared deviation comparable to $\frac{32a^2}{\lambda^2}Q_T(r_j^\lambda) $ (by \eqref{sdp4'} and \eqref{eqn:b4b2}).
Proceeding exactly as before and applying Freedman's inequality we obtain that, 
\begin{equation}
\label{komloseqn3}
\sum_{\ell=0}^\infty  \Pr\left[L_T(r_j^\lambda)\le -7\cdot2^{\ell} a, \quad Q'_T(r_j^\lambda)\le 2^{\ell+5}a\right] \leq 4 \exp(\frac{-\lambda^2}{1000a}).
\end{equation}
\end{enumerate}

Theorem~\ref{thm:discrepancykomlos} now follows by combining (\ref{komloseqn1}),(\ref{komloseqn2}),(\ref{komloseqn3}) as before and using Observation~\ref{obs:large}. 

\medskip

Theorem~\ref{komlos} now follows easily, by observing that $m$ can be assumed to be polynomially bounded in $n$ and applying a union bound. Indeed, we can discard all rows of $\ell_1$-norm less than $\sqrt{\log n}$ since they can only ever have discrepancy at most $\sqrt{\log n}$. The remaining rows have squared $\ell_2$-norm at least $\frac{\log n}{n}$, as by Cauchy-Schwarz inequality
\[ \sqrt{\log n} \le \sum_{i\in[n]}|b_{ji}| \le \left(\sum_{i\in[n]}b_{ji}^2\right)^{1/2}(n)^{1/2}.
\]
As $\sum_{i,j}b_{ji}^2 \le n$, there can be at most $n^2/\log n$ such rows. We now set $\lambda=O(\sqrt{\log n})$ in Theorem~\ref{thm:discrepancykomlos} and take a union bound over all these rows.

\section*{Acknowledgements}

We would like to thank Nick Harvey, Shachar Lovett, Aleksandar Nikolov, Thomas Rothvoss, Aravind Srinivasan and Mohit Singh for various useful discussions about discrepancy over the years. A special thanks to Aleksandar Nikolov and Kunal Talwar for organizing a recent workshop on discrepancy at American Institute of Mathematics which led to several fruitful discussions related to this work. This work was also done in part while the first author was visiting the Simons Institute for the Theory of Computing.

\bibliographystyle{alpha}
\bibliography{refr}

\end{document}